\DeclareMathOperator{\parD}{Pareto}
\DeclareMathOperator{\expD}{Exp}
\DeclareMathOperator{\sexpD}{S-Exp}
\DeclareMathOperator{\expec}{\mathbb{E}}
\newtheorem{theorem}{Theorem}
\newtheorem{lemma}{Lemma}
\def\BibTeX{{\rm B\kern-.05em{\sc i\kern-.025em b}\kern-.08em
    T\kern-.1667em\lower.7ex\hbox{E}\kern-.125emX}}
\newcommand{\prob}[1]{{\mathbb P}}
\begin{document}
\title{Redundancy Management for Fast Service (Rates) in Edge Computing Systems}

\author{Pei Peng, ~\IEEEmembership{Member,~IEEE}, Emina Soljanin,~\IEEEmembership{Fellow,~IEEE}
\thanks{This work was supported in part by the National Natural Science Foundation of China (Grant No. 62301273),  the Natural Science Foundation of Jiangsu Province (Grant No. BK20230355), the University Science Research Project of Jiangsu Province (Grant No. 23KJB120009), the Natural Science Research Start-up Foundation of Recruiting Talents of Nanjing University of Posts and
Telecommunications (Grant No. NY222008) and the NSF-BSF Award FET-2120262. This article was presented in part at the 2023 15th International Conference on Wireless Communications and Signal Processing (WCSP)\cite{peng_newaccepted} and in part at the 2023 IEEE 23rd International Conference on Communication Technology (ICCT)\cite{peng_newaccepted_ICCT}.}
\thanks{Pei Peng is with the School of Telecommunication and Information Engineering, Nanjing University of Posts and Telecommunications, Nanjing, Jiangsu 210003, China (email: pei.peng@njupt.edu.cn).}
\thanks{Emina Soljanin is with the Department of Electrical and Computer Engineering, Rutgers, The State University of
New Jersey, Piscataway, NJ 08854, USA (e-mail: emina.soljanin@rutgers.edu)}}

\markboth{IEEE/ACM Transactions on Networking, to appear}%
{Shell \MakeLowercase{\textit{et al.}}: A Sample Article Using IEEEtran.cls for IEEE Journals}

\IEEEpubid{0000--0000/00\$00.00~\copyright~2021 IEEE}
\IEEEpubidadjcol 

\maketitle

\begin{abstract}
Edge computing operates between the cloud and end users and strives to provide low-latency computing services for simultaneous users. Redundant use of multiple edge nodes can reduce latency, as edge systems often operate in uncertain environments. However, since edge systems have limited computing and storage resources, directing more resources to some computing jobs will either block the execution of others or pass their execution to the cloud, thus increasing latency. This paper uses the average system computing time and blocking probability to evaluate edge system performance and analyzes the optimal resource allocation accordingly. We also propose blocking probability and average system time optimization algorithms. Simulation results show that both algorithms significantly outperform the benchmark for different service time distributions and show how the optimal replication factor changes with varying parameters of the system. 
\end{abstract}
\begin{IEEEkeywords}
Edge computing, redundancy management, blocking system, straggler mitigation.
\end{IEEEkeywords}

\section{Introduction}
\IEEEPARstart{W}{ith} the rapid increase in IoT applications, such as smart cities and homes, autonomous vehicles, and artificial intelligence, billions of IoT devices are coming into our everyday lives \cite{shi2016edge,mao2017survey}.
The demand for low latency storage and computing services is increasing to accommodate novel IoT platforms (e.g., deep learning) \cite{li2018learning,hochstetler2018embedded,li2021random}. Some applications, for example, connected and autonomous vehicles, smart healthcare, and ocean monitoring require fast or no service.
Cloud services are inefficient at responding to such applications. 

Edge computing is an inter-layer between the cloud and the end-user. It provides storage and computing infrastructure at the node located one or two network hops from the end-user \cite{byers2017architectural}. According to \cite{yi2015fog}, the round trip time between the cloud and end-user (about $17.989$ ms) is over $10$ times larger than the time between the edge and end-user (about $1.416$ ms). Therefore, in the edge system, the bottleneck of the computing service is no longer the communication latency.
As illustrated in Figure~\ref{fig:cloud}, an edge computing system receives and processes jobs from end-users on a much smaller scale of storage and computing resources than the cloud \cite{chiang2016fog,mahmud2018fog}. Service requests get sent to the cloud when all edge workers are busy, and communication time becomes a significant part of the latency.
\begin{figure}[hbt]
\begin{center}
\begin{tikzpicture}
\node at (2.2,0) {\includegraphics[width=0.4\textwidth]{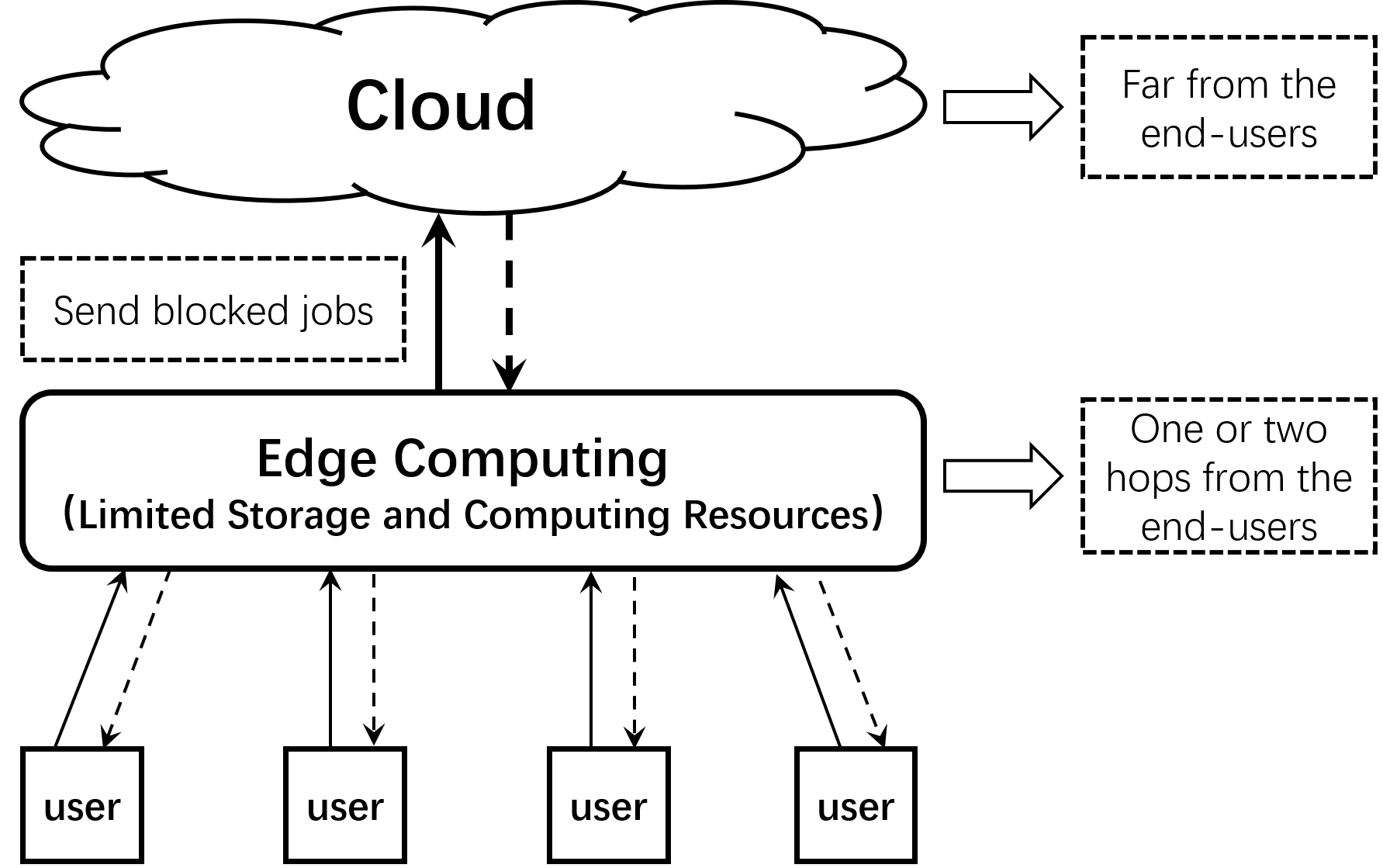}};
\end{tikzpicture}
\end{center}
\vspace*{-3ex}
\caption{Edge computing system deployed between the users and the cloud processes the jobs sent by the users. New service requests get sent to the cloud when all edge workers are busy.
\label{fig:cloud}}
\end{figure}

Edge nodes often operate in highly unpredictable environments and handle jobs that need fast or no service. Redundant use of resources is a known strategy to enable fast service. However, with increased resource usage comes a decrease in the number of jobs that the edge system can execute simultaneously. Thus, more jobs are sent to the cloud, increasing their execution time. An edge computing system that sends jobs to the cloud once resources become unavailable acts locally as a blocking system. The jobs sent to the cloud will experience higher latency as they forgo the geographic benefit \cite{satyanarayanan2017emergence}. 
\IEEEpubidadjcol

We, therefore, need to address two opposing objectives to maximize the benefits of edge computing. The first objective is to minimize the computing speed of each job executed locally at the edge. (Computing speed, expressed in terms of computing time, is a classical and essential performance metric \cite{wang2019efficient,lee2017speeding}.) The second objective is to minimize the number of jobs the edge system processes (i.e., the number of blocked jobs sent to the cloud). Due to limited storage and computing resources, the edge system may be unable to process all jobs independently of the cloud. 

There are many theoretical and practical models to solve open problems in edge computing\cite{kolosov2023theory}, we here only focus on one of them. We consider edge computing systems in which a single controller node manages a cluster of workers in computing \cite{mouradian2017comprehensive}. When a job arrives, the controller replicates it to several workers. Here, we refer to the workers processing the same job as the replication group and its size as the replication factor. Since the total number of workers in the system is limited, we have to decrease the number of groups when we increase the replication factor. The edge computing system with a higher replication factor may have a higher job-blocking probability. That is, it can process fewer jobs simultaneously. 
Therefore, processing more jobs locally may decrease the expected computing speed. 

It is generally impossible to simultaneously minimize the local job-computing time and job-blocking probability. Since both are crucial to edge computing, we will focus on finding the trade-off between these performance metrics. We combine the local edge computing time and the time spent reaching the cloud (which happens with the job-blocking probability at the edge) into a single metric representing the average service time. 
We would like to remind the reader that other related crucial edge performance metrics are beyond the scope of this paper. For example, \cite{saleem2020mobility} focuses on minimizing task offload latency by jointly formulating task assignment and power allocation.\cite{liu2019dynamic} proposed a novel offloading system with resource allocation strategies to achieve minimum energy. \cite{zhu2018task} formulate different energy minimization problems by exploiting the user's mobility prediction. \cite{liu2016delay} focuses on minimizing task delay and power consumption by proposing an efficient one-dimensional search algorithm.

A large body of literature argues the benefits of redundancy in distributed computing, e.g., \cite{kianidehkordi2020hierarchical,CC:ozfaturayGU19,schlegel2023codedpaddedfl}, and references therein. Redundancy also benefits the edge systems. For example, \cite{zhang2019model} showed that coding is instrumental in reducing overall computation and communication latency. \cite{wang2021optimal} proposed the linear coding policy to minimize resource consumption with heterogeneous mobile devices. \cite{choudhury2022tackling} coded across the servers to flexibly serve multiple types of jobs without additional storage cost. However, because of the adverse effects of temporarily redundantly using more system resources, there are limits to these benefits. The optimal levels of redundancy have yet to be fully characterized; see, e.g., \cite{stragglers:AktasS19,peng2021diversity} and references therein, and are the subject of current research. In which, \cite{stragglers:AktasS19} analyzed the cost and latency problem affected by the replicated or erasure-coded redundancy in distributed systems; \cite{peng2021diversity} analyzed the level of redundancy that affects the computation time under different service time distributions and task size scaling models.

The contributions of this paper are summarized as follows:
\begin{enumerate}[leftmargin=*]
\item We propose an edge computing system model that combines the distributed and blocking systems. We aim to evaluate the appropriate performance metrics and find the optimal replication factor under different service time distributions. 

\item We theoretically analyze the job-blocking probability for shifted exponential and Pareto service times and conclude spreading computing resources to more jobs is a better strategy. Meanwhile, we find the optimal replication factor varies with the tail index for Pareto service time and propose the BPO algorithm accordingly.   

\item We adopt {\it the average system time} to evaluate the trade-off between the job-computing time and the system job-blocking probability. The theoretical and numerical analysis demonstrates that different values of replication factors affect system performance. Then, we propose the ASTO algorithm accordingly, and the simulation results verify the effectiveness of the redundant computing resource allocation.

\item We separately analyze the minimum average system time and the optimal replication factor, which changes with different system parameters such as the job arrival rate, the cloud time, and the number of workers. We summarize the patterns of the variation of the optimal allocation strategy.
\end{enumerate}

The paper is organized as follows. In Section~\ref{sec:sys}, we describe the architecture of the edge computing system and formulate the problem. In Section~\ref{sec:problem}, we state the problem and introduce the formulas of performance metrics. In Section~\ref{sec:ave_exp}, we theoretically and numerically analyze the job-blocking probability and propose the BPO algorithm for Pareto service time.
In Section~\ref{sec:ave_shift}, we find the optimal replication factor for the average system time under certain conditions and propose the ASTO algorithm. In Section~\ref{Sec:para}, we analyze how the average system time and the optimal replication factor change with different system parameters.
Conclusions are given in Section~\ref{sec:conclusions}. 

\section{System Model}
\label{sec:sys}

\begin{figure}[hbt]
\begin{center}
\begin{tikzpicture}
\node at (2.2,0) {\includegraphics[width=0.475\textwidth]{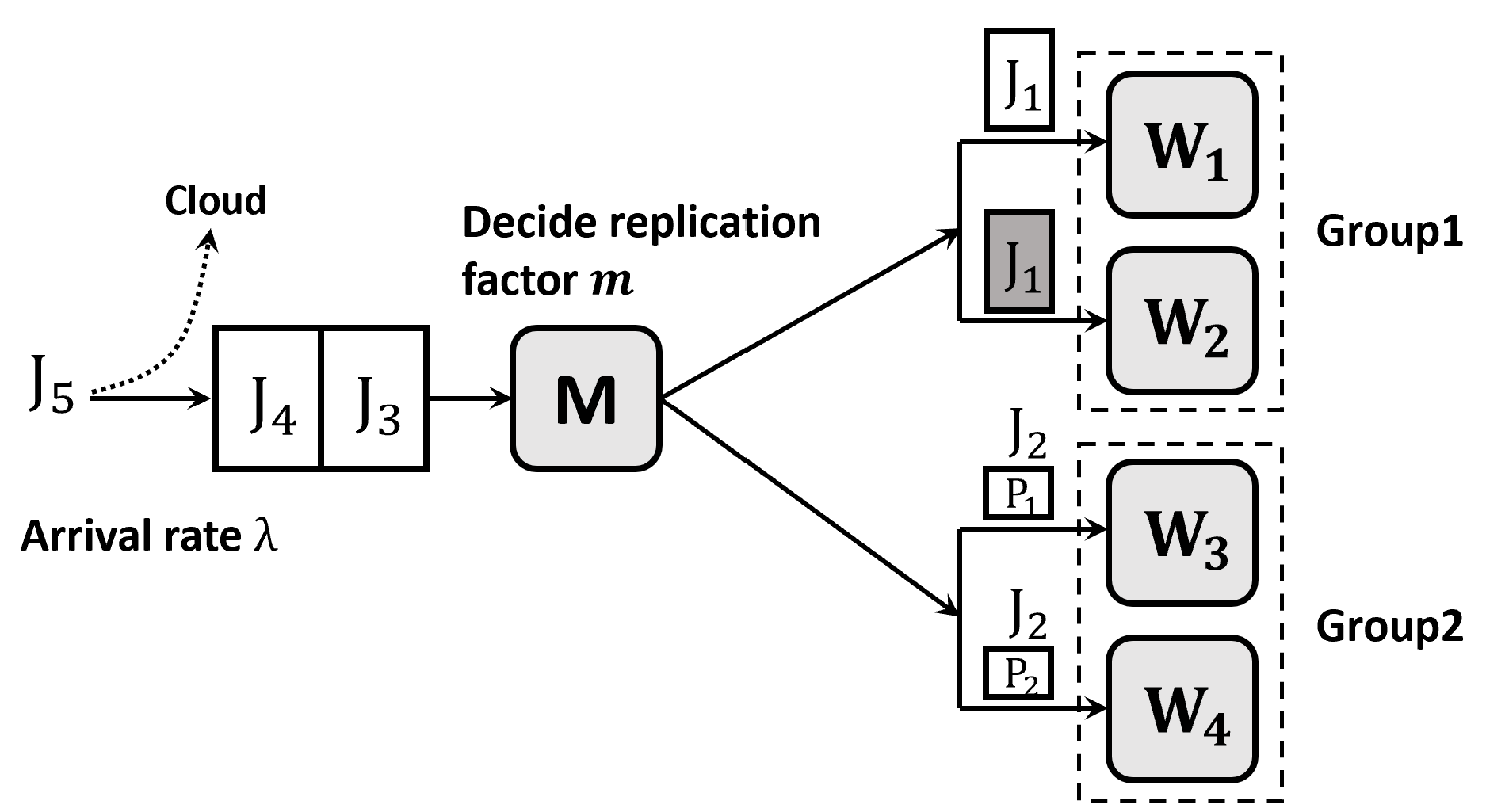}};
\end{tikzpicture}
\end{center}
\vspace*{-3ex}
\caption{\textbf{An Edge Computing System:} Controller node $M$ processes jobs $J_i$, possibly generates redundant tasks and dispatches them to workers $W_1, W_2 , W_3, W_4$. $W_1$ and $W_2$ work as a group to process $J_1$, where the shaded job of $J_1$ indicates a redundant job. $W_3$ and $W_4$ also work as a group; each worker processes $1/2$ part of $J_2$.
\label{fig:model}}
\end{figure}

\subsection{System Architecture}

We consider the edge computing system model shown in Figure~\ref{fig:model} as a combination of the distributed computing system and the blocking system. The edge computing system has limited storage and computing resources\cite{mao2017survey}. It consists of a single front-end controller node and multiple computing servers, which we refer to as workers. The single controller node manages the entire computing cluster of nodes. This architecture is commonly implemented in modern frameworks, such as Apache Mesos~\cite{Mesos:HindmanKZ11}, and edge computing systems with limited storage and computing resources~\cite{mouradian2017comprehensive}. Besides, this model is not limited to the above distributed structured system. In the edge computing system, the computing cycles in the CPU are practical to evaluate the amount of computing resources\cite{luo2021resource}. Each job requests specific computing cycles, but the edge system can simultaneously offer resources to limited jobs. Thus, when we define the worker as the number of request computing cycles for one job, the edge computing system can be modeled as in Figure~\ref{fig:model}. 

The controller node will divide the workers into several groups. The controller node creates $m$ copies for each arriving job and assigns each copy to a worker in a group. In Figure~\ref{fig:model}, the controller node sends $J_1$ and its copy to workers 1 and 2 ($m=2$) and sends $J_2$ and its copy to workers 3 and 4 ($m=2$).
This execution replication mitigates straggling and reduces the job's expected completion time. The larger the replication factor $m$, the higher the job's average completion time reduction.

On the other hand, when all workers are busy, the new request for job execution gets blocked. An edge computing system may send such jobs to the cloud (see Figure~\ref{fig:model}, job $J_5$). There is a significant communication latency between the edge and the cloud, which may be much longer than the expected job-computing time. 
Therefore, we want the system to serve more jobs and send fewer to the cloud.

\subsection{Redundancy and Straggler Mitigation}

job-computing time $T_{\text{job}}$ is a crucial performance metric in an edge computing system. However,
task straggling is a fundamental problem in distributed systems that significantly affects performance. To solve this problem, replication is an effective technique to introduce redundancy to mitigate stragglers. 
In Figure~\ref{fig:model}, the controller node sends the $J_1$ and its copy separately to workers 1 and 2. Compared to $J_2$ without redundancy, even though one worker processes the job for a long time, the controller node can still receive the result from the other worker. This problem is well studied in \cite{peng2021diversity}.

\subsection{Performance Metrics}

There are two performance metrics of interest in the described system: 1) the job-computing time $T_{\text{job}}$ and 2) the job-blocking probability $P_b$. The first metric focuses on processing each job fast, and the second focuses on processing more jobs in the edge system.
The system's goal is to minimize both metrics. The design parameters for a given system size (fixed number of workers) are the replication factor $m$ and the number of groups $c$. Increasing $ m$ and decreasing $ c$ reduce $ T_textjob$ (improving the first performance metric). However, the effect of $m$ and $c$ on $P_b$ is not immediately apparent. Increasing $m$ will temporarily occupy more servers per job but will also make the jobs stay in the system for a shorter time. 

In general, there may not be an optimal replication factor $m$ that simultaneously minimizes both $\expec[T_{\text{job}}]$ and $P_b$. 
We use the \textit{ average system time} to evaluate these two metrics' trade-offs. The average system time is defined as 
\begin{equation}
    \expec[T_{\text{sys}}]=(1-P_b)\expec[T_{\text{job}}]+P_b\expec[T_{\text{cl}}].
    \label{Eq:avetime}
\end{equation}
where $T_{\text{cl}}$ is the completion time of the jobs blocked by the edge system and executed in the cloud.

We consider the average system time as the primary performance metric to analyze in this paper. 
According to the system model, the job-computing time is generally smaller than the cloud time. Otherwise, the users should send all jobs to the cloud. Therefore, the average system time focuses more on reducing job-blocking probability when the cloud time is larger and reducing the job-computing time when the cloud time is smaller. 

\subsection{Job Arrival and Service Time}

M/M/1 or M/M/n queues have been used in task allocation models in edge systems\cite{chen2021recent,fan2018application,chen2018computation}. Here, we assume the job arrivals follow a Poisson process with a rate $\lambda$, which allows us to model the blocking system as an M/G/c/k queue. We use $c$ to evaluate the system capacity, defined as the number of jobs the system can process simultaneously. When $c$ is small, the blocking system is modeled as an edge system with relatively limited computing resources or processes large-size jobs, such as edge computing in the low-earth orbit satellite environment\cite{tang2021computation,wang2022collaborative}. Otherwise, the blocking system is modeled as a general edge system.

Analyzing the job-blocking probability of an M/G/c/k queue that models the blocking operation of the system in Figure~\ref{fig:model} is a highly complex problem. To better understand it, we consider the M/G/c/c queue, where the queue length equals the number of groups. 
Then, the service time follows the shifted exponential distribution $\sexpD(\Delta,\mu)$, where $\Delta$ is the shift parameter and $\mu$ is the rate parameter, a straightforward model widely used in distributed computing\cite{latency:JoshiSW17,bitar2020minimizing,CC:duttaCP16}. Here, $\Delta$ is an initial handshake time, after which the worker will complete the job in some $\expD(\mu)$ time. Notice that the exponential service time is a particular case when $\Delta=0$, the blocking system becomes an Erlang B model.
Similarly, we will also consider the M/G/c/c queue with the $\parD(\beta,\alpha)$ service time, where $\beta$ is the scale parameter and $\alpha$ is the tail index. Pareto distribution is also widely used in distributed computing\cite{stragglers:AktasS19,GradientCoding:TandonLD17}.

\section{Problem Statement and Formulas}
\label{sec:problem}
\subsection{Problem Statement}
\begin{center}
    \begin{tabular}{ rl}
       $N$  --  &  number of workers in the system\\
       $m$   --  & replication factor, number of workers in a group\\
       $c$  -- & number of groups, $c=N/m$\\
       $\lambda$ -- & job arrival rate\\
       $T_{\text{cl}}$ -- & time that a job spends in the cloud (cloud time)\\
       $T_{\text{job}}$ -- & job-computing time \\
       $P_b$ -- & job-blocking probability \\
       $T_{\text{sys}}$ -- & average system time\\
    \end{tabular}
\end{center}

The system parameters and notations are summarized in the above list.
Our goal is to evaluate two performance metrics, the expected job-computing time $\expec[T_{\text{job}}]$ and the job-blocking probability $P_b$ for systems with Poisson job arrivals and shifted exponential/Pareto service time. We separately compute the replication factor $m$ (or the number of groups $c$) that minimizes $\expec[T_{\text{job}}]$ and $P_b$. When $c$ is small, the system concentrates limited computing resources on a few jobs. When $c$ is large, the system spreads the computing resources over more jobs. The edge system will process a job faster with more computing resources and slower with fewer resources. To better evaluate $\expec[T_{\text{job}}]$ and $P_b$, we use the average system time $\expec[T_{\text{sys}}]$ to evaluate the trade-off between these two metrics.
We show how to achieve the desired trade-off between these two traditional metrics by selecting an appropriate $m$. 

We summarize the optimal resource allocation on the average system time in Table~\ref{Table:ave}. We conclude that it is better to concentrate computing resources on a few jobs when the job arrival rate and the cloud time are small. Otherwise, we should consider spreading computing resources to more jobs. Generally, we should balance concentrating and spreading the computing resources as the increase of the number of workers. The tail index is also essential in selecting the optimal resource allocation for Pareto service time.

\begin{table*}[hbt]
\begin{center}
\begin{threeparttable}
    \caption{Optimal Resource Allocation on Average System Time }
    \label{Table:ave}
				\begin{tabular}{@{}lcc|cc|cc|cc@{}}
				\toprule 
				 &  \multicolumn{2}{c|}{\sc Arrival Rate $\lambda$} &  \multicolumn{2}{c|@{}}{\sc Cloud Time $\expec[T_{\text{cl}}]$} &  \multicolumn{2}{c|@{}}{\sc Number of Workers $N$} & \multicolumn{2}{c|@{}}{\sc Tail Index $\alpha$} 
				\\\cmidrule{2-9}
				 &{\bf Large}  & {\bf Small} &  {\bf Large}  & {\bf Small} & {\bf Large}  & {\bf Small} & {\bf Large}  & {\bf Small}\\ [2mm]
			   {\bf Exp}  &  B & C  &  S & C& C & B& \textbackslash  & \textbackslash \\[2mm] 
				{\bf S-Exp}   & B & C   & S  & C & B  & B &\textbackslash  & \textbackslash \\[2mm]
    {\bf Pareto}   & B & C   & B  & C & B  & B & C  & S \\
				\bottomrule
			\end{tabular}
\begin{tablenotes}
\footnotesize
\item C means concentrating computing resources on a few jobs.
\item S means spreading computing resources to more jobs.
\item B means balance between concentrating and spreading computing resources
\end{tablenotes}
    \end{threeparttable}
	\end{center}
\end{table*}

We here put together the relevant results of \cite{peng_newaccepted_ICCT} and \cite{peng_newaccepted}, which focuses on analyzing the average system time and the system service rate in the edge system under exponential service time (part of Section~\ref{sec:ave_exp})).

\subsection{Formulas of Performance Metrics}
\label{subsec:formula}
We adopt the shifted-exponential distributions as classical and simple service time models to analyze the edge computing system. We consider the system to use $m$-fold replication. Considering $\sexpD(\Delta,\mu)$ service time, the expected job-computing time is given by 
\begin{equation}
\label{Eq:time_shift}
    \expec[T_{\text{job}}(m)]= \Delta+\frac{1}{m\mu}\; \;\text{or}\; \;\expec[T_{\text{job}}(c)]=\Delta+\frac{c}{N\mu}.
\end{equation}
Observe that $T_{\text{job}}$ follows the shifted exponential distribution with the shift $\Delta$ and the rate parameter $m\mu$.  The edge system will spend $\Delta+\frac{1}{\mu}$ time processing the job with the minimum computing resource and $\Delta+\frac{1}{N\mu}$ time to process the job with the maximum computing resource.  

Considering $\parD(\beta,\alpha)$ service time,  the expected job-computing time is given by 
\begin{equation}
\label{Eq:time_pareto}
    \expec[T_{\text{job}}(m)]= \frac{\beta m \alpha}{m\alpha-1}\; \;\text{or}\; \;\expec[T_{\text{job}}(c)]=\frac{\beta N \alpha}{N\alpha-c}.
\end{equation}
The edge system will spend $\frac{\beta\alpha}{\alpha-1}$ time processing the job with the minimum computing resource, and $\frac{\beta\alpha N}{\alpha N-1}$ time to process the job with the maximum computing resource.

Analyzing the job-blocking probability of an M/M/c/k (or M/G/c/k) queue that models the blocking operation of the system in Figure~\ref{fig:model} is a highly complex problem. To better understand it, we consider the M/G/c/c queue, where the queue length equals the number of groups. Then, for a blocking system with $c$ groups and the job arrives as a Poisson process with the rate $\lambda$, the job-blocking probability is 
\begin{equation}
\label{Eq:blocking}
P_b(c,\rho)=\frac{(c\rho)^c/c!}{\sum_{j=0}^{c}(c\rho)^j/j!}.
\end{equation}
Where $\rho=\lambda\expec[T_{\text{job}}]/c\le1$. The above expression shows that for a given $c$, $P_b$ increases with $\rho$.

\section{job-blocking probability}
\label{sec:ave_exp}
We first consider processing more jobs in the edge computing system to use the advantage of edge computing. When computing resources are concentrated on a few jobs, the edge system will process each job faster by allocating more workers and using replication redundancy. But, the edge system can only process fewer jobs simultaneously. In the following, we will analyze how the replication factor $m$ and the number of groups $c$ affect the system performance regarding the job-blocking probability $P_b$.

\subsection{Shifted Exponential Service Time}

When the worker computes each job follows $\sexpD(\Delta,\mu)$, the expected job-computing time is $\expec[T_{\text{job}}(c)]=\Delta+\frac{c}{N\mu}$ according to \eqref{Eq:time_shift}, then we can rewrite \eqref{Eq:blocking} in the following,
\begin{equation}
\label{Eq:rewrite_shift}
P_b(c)=\frac{(K(\Delta N\mu+c))^{c}/c!}{\sum_{j=0}^{c}(K(\Delta N\mu+c))^j/j!}.
\end{equation}
Where $K=\frac{\lambda}{N\mu}$.
From the above formula, we find the optimal replication factor $m$ for the job-blocking probability, as follows.

\begin{theorem}
\label{Thm:block}
For the blocking system with Poisson($\lambda$) arrivals and $\sexpD(\Delta,m\mu)$ service time, the job-blocking probability $P_b$ decreases with increasing $c$ and reaches the minimum at $m=1$ (i.e., $c=N$).
\end{theorem}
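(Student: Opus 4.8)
The plan is to work not with $P_b(c)$ directly but with its reciprocal, and to exploit an integral representation of the Erlang~B formula that disentangles the two competing ways in which $c$ enters. From \eqref{Eq:rewrite_shift}, write the offered load as $a(c)=K(\Delta N\mu+c)=\lambda\Delta+Kc$ with $K=\lambda/(N\mu)$, so that \eqref{Eq:blocking} becomes $P_b(c)=\big(a(c)^c/c!\big)\big/\sum_{j=0}^{c}a(c)^j/j!$. The object that behaves monotonically is
\begin{equation}
\frac{1}{P_b(c)}=\frac{c!}{a(c)^c}\sum_{j=0}^{c}\frac{a(c)^j}{j!},\nonumber
\end{equation}
and the goal is to show this is non-decreasing in $c$, which is equivalent to $P_b$ being non-increasing.

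First I would establish the representation
\begin{equation}
\frac{1}{P_b(c)}=\int_0^\infty e^{-s}\Big(1+\frac{s}{a(c)}\Big)^{c}\,ds.\nonumber
\end{equation}
This follows by expanding $(1+s/a)^c$ with the binomial theorem, using $\int_0^\infty e^{-s}s^k\,ds=k!$, and reindexing $k=c-j$; alternatively it can be obtained by induction from the standard recursion $1/P_b(c)=1+(c/a)\,\big(1/P_b(c-1)\big)$. Its advantage is that the right-hand side is smooth and well defined for all real $c\ge 1$, so I may treat $c$ as continuous and then read off the conclusion at the admissible discrete values $c=N/m$.

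Next I would show the integrand is non-decreasing in $c$ for each fixed $s>0$. Writing $h(c,s)=\exp\{c\ln(1+s/a(c))\}$ and differentiating the exponent (recall $a'(c)=K$),
\begin{equation}
\frac{\partial}{\partial c}\Big[c\ln\!\Big(1+\frac{s}{a}\Big)\Big]=\ln\!\Big(1+\frac{s}{a}\Big)-\frac{csK}{a(a+s)}.\nonumber
\end{equation}
Using the elementary bound $\ln(1+x)\ge x/(1+x)$ with $x=s/a$, the right-hand side is at least $\tfrac{s}{a+s}\big(1-cK/a\big)$, and here $cK/a(c)=Kc/(\lambda\Delta+Kc)\le 1$ because $\lambda\Delta\ge 0$. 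Hence the exponent, and therefore $h(c,s)$, is non-decreasing in $c$; integrating against $e^{-s}$ shows $1/P_b(c)$ is non-decreasing, i.e. $P_b(c)$ is non-increasing. Since the largest feasible number of groups is $c=N$ (at $m=1$), the minimum of $P_b$ is attained there, as claimed.

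The main obstacle is exactly that $c$ appears twice: as the number of servers, where increasing it alone lowers blocking, and inside the offered load $a(c)$, where increasing it alone raises blocking, so a direct monotonicity argument is inconclusive. The integral representation is what separates these effects, and the single inequality $cK/a(c)\le 1$ — which holds \emph{precisely} because the shift contributes the nonnegative term $\lambda\Delta$ to the load — is the crux that forces the net effect to be a decrease. I would finally note the boundary case $\Delta=0$: there the bound is tight only through the strict inequality $\ln(1+x)>x/(1+x)$ for $s>0$, still yielding a strict decrease, which recovers the classical Erlang~B ``trunking efficiency'' as the exponential special case.
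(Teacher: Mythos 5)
Your proof is correct, but it takes a genuinely different route from the paper's. The paper works directly with the reciprocal sum: it rewrites $1/P_b(c)=\sum_{j=0}^{c}\bigl(\prod_{i=1}^{j}\tfrac{c-j+i}{c+\Delta N\mu}\bigr)K^{-j}$ and observes that each factor $\tfrac{c-j+i}{c+\Delta N\mu}$ is increasing in $c$ (because $i\le j$ and $\Delta N\mu\ge 0$), so the sum for larger $c$ dominates term by term and also has more terms; this is a short, purely discrete comparison. You instead invoke the classical integral representation $1/P_b(c)=\int_0^\infty e^{-s}(1+s/a(c))^{c}\,ds$, extend $c$ to a continuous variable, and differentiate under the integral, reducing everything to the single inequality $cK/a(c)=Kc/(\lambda\Delta+Kc)\le 1$ together with $\ln(1+x)\ge x/(1+x)$. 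Both arguments ultimately exploit the same structural fact --- the shift contributes the nonnegative constant $\lambda\Delta$ to the offered load, which is exactly what keeps the "more load" effect from overtaking the "more servers" effect --- but your version makes that competition explicit in one derivative and isolates where the hypothesis $\Delta\ge 0$ enters, and it would adapt more readily to non-integer $c$ or to quantifying the slack; the paper's version is more elementary and self-contained, needing no integral identity. One small caution: the recursion $1/P_b(c)=1+(c/a)\cdot 1/P_b(c-1)$ you mention as an alternative derivation holds at \emph{fixed} offered load $a$, so if you use it you should first establish the identity for each fixed pair $(c,a)$ and only then substitute $a=a(c)$; your binomial-expansion derivation avoids this subtlety entirely and is the cleaner choice.
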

\begin{proof}

Assume $c_1>c_0$, from  \eqref{Eq:rewrite_shift}, the job-blocking probability is 
\begin{align*}
&P_b(c_1)=\frac{(K(\Delta N\mu+c_1))^{c_1}/c_1!}{\sum_{j=0}^{c_1}(K(\Delta N\mu+c_1))^j/j!}\\
&=\frac{1}{\sum_{j=0}^{c_1}\frac{c_1!/j!}{(K(\Delta N\mu+c_1))^{c_1-j}}}
=\frac{1}{\sum_{j=0}^{c_1}(\prod_{i=1}^{j}\frac{c_1-j+i}{c_1+\Delta N\mu})\frac{1}{K^j}}\\
&<\frac{1}{\sum_{j=0}^{c_0}(\prod_{i=1}^{j}\frac{c_1-j+i}{c_1+\Delta N\mu})\frac{1}{K^j}}<\frac{1}{\sum_{j=0}^{c_0}(\prod_{i=1}^{j}\frac{c_0-j+i}{c_0+\Delta N\mu})\frac{1}{K^j}}\\
&=\frac{(K(\Delta N\mu+c_0))^c_0/c_0!}{\sum_{j=0}^{c_0}(K(\Delta N\mu+c_0))^j/j!}=P_b(c_0).
\end{align*}
Since $c\in[1,N]$, $P_b$ increases with the number of groups $c$ and reaches its minimum at $c=N$.
\end{proof}

From Theorem~\ref{Thm:block}, we conclude that spreading computing resources over more jobs is a better way to support the system processing more jobs. We have similar results for exponential service time to the above theorem since exponential distribution is a special case of the shifted exponential distribution when the shift parameter $\Delta=0$.

\subsection{Pareto Service Time}
When the worker computes each job follows $\parD(\beta,\alpha)$, the expected job-computing time is $\expec[T_{\text{job}}(m)]=\frac{\beta m\alpha}{m\alpha-1}$ according to \eqref{Eq:time_pareto}, then we can rewrite \eqref{Eq:blocking} as a function of $m$ in the following,
\begin{equation}
\label{Eq:rewrite_pareto}
P_b(m)=\frac{(\lambda\expec[T_{\text{job}}(m)]m/N)^{N/m}/(N/m)!}{\sum_{j=0}^{N/m}(\lambda\expec[T_{\text{job}}](m)m/N])^j/j!}.
\end{equation}
From the above formula, we find the optimal replication factor $m$ for the job-blocking probability, as follows.

\begin{lemma}
    \label{Le:block_pareto}
    For the blocking system with Poisson($\lambda$) arrivals and $\parD(\beta,\alpha)$ service time, when $\alpha\ge 1.5$, the job-blocking probability $P_b$ increases with $m$ and reaches the minimum at $m=1$ (i.e., $c=N$); When $\alpha<1.5$, $P_b$ reaches the minimum at $\min\{P_b(m=1), P_b(m=2)\}$.
\end{lemma}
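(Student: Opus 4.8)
The plan is to reparametrize the Erlang~B formula \eqref{Eq:blocking} in terms of the pair $(c,\rho)$, where $c=N/m$ is the number of servers and $\rho=\lambda\expec[T_{\text{job}}]/c$ is the offered load per server, and then track how each coordinate moves as the replication factor $m$ grows. Writing $B(c,\rho)$ for the blocking probability of an Erlang~B system with $c$ servers and per-server load $\rho$, I would invoke two classical monotonicity properties: (i) for fixed $c$, $B$ is increasing in the offered load, hence in $\rho$; and (ii) for fixed $\rho<1$, $B$ is decreasing in the number of servers $c$ (the trunking-efficiency/economy-of-scale property). Using \eqref{Eq:time_pareto}, the per-server load is $\rho(m)=\lambda m\,\expec[T_{\text{job}}(m)]/N=\lambda\beta\alpha m^2/\big(N(m\alpha-1)\big)$, while $c(m)=N/m$.

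First I would establish monotonicity on the range $m\ge 2$. Differentiating gives $\rho'(m)\propto m(m\alpha-2)$, so $\rho(m)$ is non-decreasing exactly when $m\alpha\ge 2$; since $\alpha\ge 1$, this holds for every integer $m\ge 2$. Thus as $m$ increases past $2$, $c(m)$ strictly decreases while $\rho(m)$ is non-decreasing, and the two-step comparison $B(c(m{+}1),\rho(m{+}1))\ge B(c(m{+}1),\rho(m))\ge B(c(m),\rho(m))$ (apply (i) then (ii)) yields $P_b(m{+}1)\ge P_b(m)$. Hence $P_b$ is increasing over $m\ge 2$ for every $\alpha\ge 1$, and the global minimum can only occur at $m=1$ or $m=2$.

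It remains to compare $P_b(1)$ and $P_b(2)$, which is where the threshold $\alpha=1.5$ enters. Here $c(2)=N/2<N=c(1)$ always, so $m=2$ is penalized on the server-count axis; the load axis is governed by $\rho(2)/\rho(1)=4(\alpha-1)/(2\alpha-1)$, which exceeds $1$ iff $2\alpha\ge 3$, i.e.\ $\alpha\ge 1.5$. For $\alpha\ge 1.5$ we have $\rho(2)\ge\rho(1)$ together with $c(2)<c(1)$, so both (i) and (ii) force $P_b(2)\ge P_b(1)$; combined with the $m\ge 2$ monotonicity this makes $P_b$ increasing on the whole range, with minimum at $m=1$. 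For $\alpha<1.5$ the two axes conflict, since $m=2$ has fewer servers but strictly lower load, so the sign of $P_b(2)-P_b(1)$ genuinely depends on $\lambda,\beta,N$; because $P_b$ is still increasing for $m\ge 2$, the minimum is forced to be $\min\{P_b(1),P_b(2)\}$, as claimed.

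The main obstacle I anticipate is supplying a justification of property (ii) at the level of rigor of Theorem~\ref{Thm:block}: unlike the shifted-exponential case, the factor $(c-j+i)/A(c)$ in the reciprocal-of-a-sum representation underlying \eqref{Eq:rewrite_pareto} is a downward parabola in $c$ (because the Pareto load $A(c)=\lambda\beta N\alpha/(N\alpha-c)$ blows up as $c\to N\alpha$), so the termwise-monotonicity trick used there no longer applies, and the fixed-$\rho$ comparison must be invoked as a known fact or proved separately from the Erlang~B recursion. A secondary point to treat carefully is the regime $\alpha\le 1$: there $\expec[T_{\text{job}}(1)]$ is infinite and the $m\ge 2$ monotonicity can fail, so I would state the lemma explicitly under $\alpha>1$.
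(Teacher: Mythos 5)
Your proof is correct and is essentially the paper's own argument: the paper likewise decomposes the step from $m$ to $m+1$ into a fixed-load change of server count followed by a fixed-server-count change of load (its condition $r_{par}\ge r_{exp}$ is exactly your $\rho(m+1)\ge\rho(m)$, giving the threshold $\alpha\ge\frac{2m+1}{m^2+m}$, which is automatic for $m\ge 2$ and reduces to $\alpha\ge 1.5$ at $m=1$). The economy-of-scale property (ii) that you flag as the main obstacle is already supplied by Theorem~\ref{Thm:block} specialized to $\Delta=0$ --- at fixed per-server load the Erlang~B blocking depends only on $c$, not on the service distribution --- and the paper invokes it in exactly this way, so no separate argument is needed.
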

\begin{proof}

From Theorem~\ref{Thm:block}, we have $P_b(c)>P_b(c+1)$ also holds for exponential service time. Since $m=\frac{N}{c}$, we have $P_b(m)<P_b(m+1)$. Let $r_{exp}=\frac{\expec[T_j(m+1)]}{\expec[T_j(m)]}=\frac{m}{m+1}$. Let $T_{par}(m)$ be the job-computing time for Pareto service time and $r_{par}=\frac{\expec[T_{par}(m+1)]}{\expec[T_{par}(m)]}$. 

For Pareto service time, according to the above conclusion of exponential service time, we have the relationship of the job-blocking probability $P_{par}(m)$ when $r_{par}\ge r_{exp}$.
\begin{align*}
&P_{par}(m+1)\\&=\frac{(\lambda \expec[T_{par}(m+1)](m+1)/N)^{N/(m+1)}/(N/(m+1))!}{\sum_{j=0}^{N/(m+1)}(\lambda \expec[T_{par}(m+1)](m+1)/N)^j/j!}\\
&>\frac{(\lambda \expec[T_{par}(m+1)]m/(r_{exp}N))^{N/m}/(N/m)!}{\sum_{j=0}^{N/m}(\lambda \expec[T_{par}(m+1)]m/(r_{exp}N)^j/j!}\\
&\ge\frac{(\lambda \expec[T_{par}(m+1)]m/(r_{par}N))^{N/m}/(N/m)!}{\sum_{j=0}^{N/m}(\lambda \expec[T_{par}(m+1)]m/(r_{par}N)^j/j!}\\
&=P_{par}(m)
\end{align*}
We want to find if Pareto service time satisfies the condition $r_{par}\ge r_{exp}$. From \eqref{Eq:time_pareto}, we have
    \begin{align*}
        \frac{\expec[T_j(m+1)]}{\expec[T_j(m)]}\ge\frac{m}{m+1} &\Leftrightarrow \frac{(m+1)(m\alpha-1)}{m(m\alpha+\alpha-1)}\ge\frac{m}{m+1}\\
        &\Leftrightarrow \alpha\ge \frac{2m+1}{m^2+m}
    \end{align*}
    The right side of the above formula decreases with increasing $m$. When $m\ge2$, $\frac{2m+1}{m^2+m}\le1$. For the Pareto distribution, the tail index $\alpha$ is larger than $1$. Therefore, the above condition always holds when $m\ge2$.
    When $m=1$, the formula shows that the condition holds when $\alpha\ge 1.5$. In contrast, when $\alpha< 1.5$, the job-blocking probability $P_b(m=2)$ may be smaller than $P_b(m=1)$.
\end{proof}
From the above lemma, we conclude that spreading computing resources over more jobs is generally a better way to support the system processing more jobs. However, the optimal replication factor depends on the tail parameter of the Pareto distribution.

\subsection{Job-blocking Probability versus Job-computing time}
From \eqref{Eq:time_shift} and \eqref{Eq:time_pareto}, we conclude that the expected job-computing time $\expec[T_{\text{job}}]$ increases with the number of groups $c$ and reaches its minimum at $c=1$ (i.e., $m=N$). However, from Theorem~\ref{Thm:block} and Lemma~\ref{Le:block_pareto}, we conclude that spreading computing resources over more jobs is generally a better way to support the system processing more jobs. Therefore, optimizing the job-blocking probability and the expected job-computing time with the variable $c$ is a dilemma.

We verify the above conclusions for both service time distributions in Figure~\ref{fig:timeprob}.
\begin{figure}[hbt]
    \centering
    \includegraphics[width=0.48\textwidth]{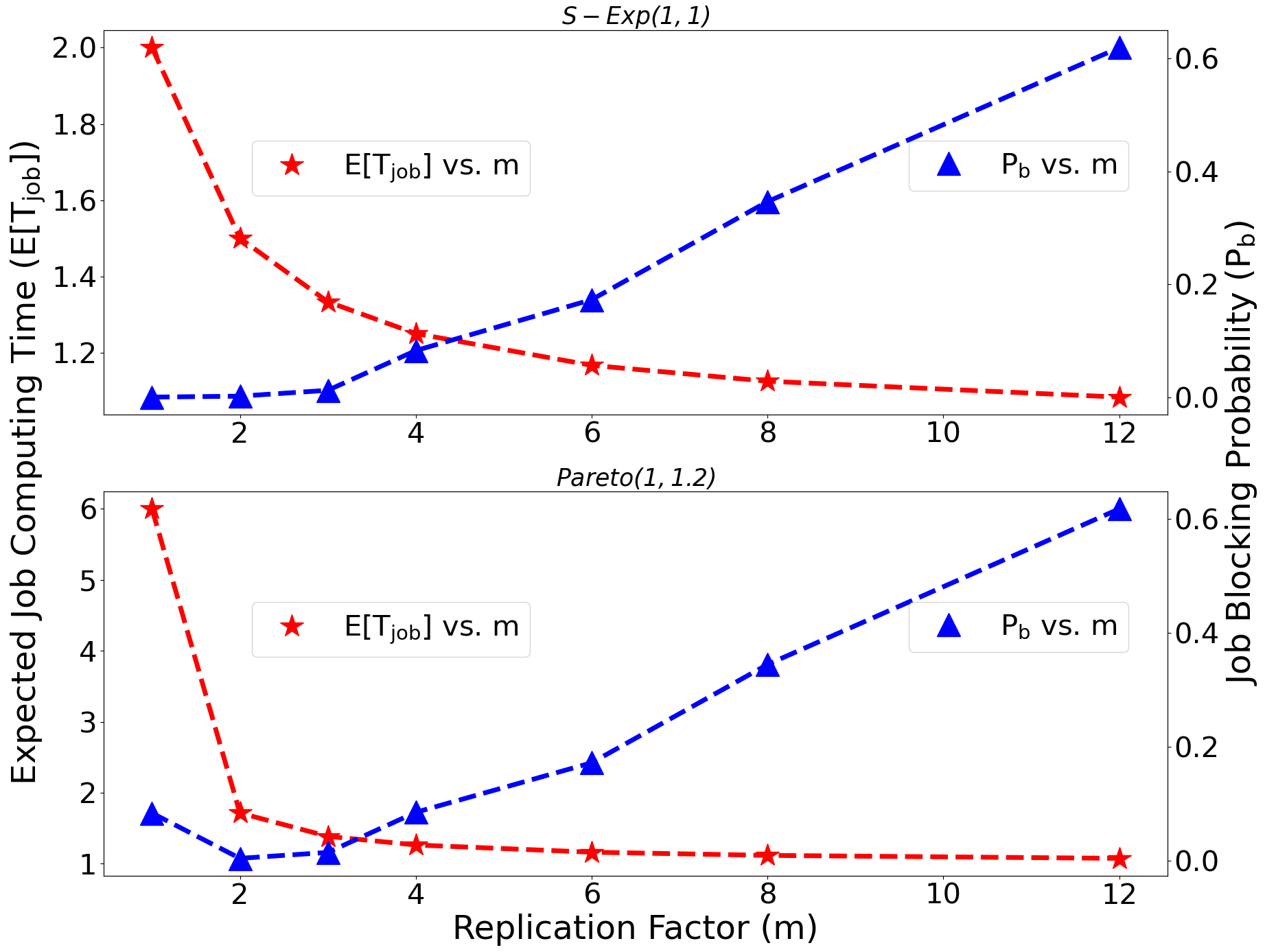}
    \caption{The expected job-computing time $\expec[T_{\text{job}}]$ and the job-blocking probability $P_b$ as a function of  $m$. The number of workers is $N=24$. The service time distribution are $\sexpD(1,1)$ (upper) and $\parD(1,1.2)$ (lower). }
    \label{fig:timeprob}
\end{figure}
The red star curves in both graphs show that introducing more redundancy increases computing speed. The blue triangle curves show that introducing redundancy results in more jobs sent to the cloud. For Pareto service time, the optimal replication factor is $m=2$.
The figure also suggests that some minimal replication significantly reduces job-computing time with almost no change in job-blocking probability.  

\subsection{Job-blocking Probability Optimization Algorithm}
The optimal replication factor for shifted exponential service time that minimizes the job-blocking probability is $m=1$ according to Theorem~\ref{Thm:block}. Thus, there is no reason to concentrate computing resources on a few jobs and introduce replication redundancy.
However, the optimal replication factor for Pareto service time that minimizes the job-blocking probability depends on the tail parameter $\alpha$ according to Lemma~\ref{Le:block_pareto}. Thus, we will propose Algorithm~\ref{Alg:1} in the following to verify the effectiveness of the redundant computing resource allocation.

\begin{algorithm}[hbt]
\caption{job-blocking probability Optimization (BPO)}
\label{Alg:1}
\begin{algorithmic}[1]
\Require Number of workers $N$, job arrival rate $\lambda$, Pareto service time $\parD(\alpha,\beta)$
\Ensure Optimal replication factor $m_o$
\If {$\alpha<1.5$}
    \State $P_1\gets \frac{(\lambda\alpha\beta/(\alpha-1))^N/N!}{\sum_{j=0}^{N}(\lambda\alpha\beta/(\alpha-1))^j/j!}$
    \State $k\gets \lfloor N/2 \rfloor$
    \State $P_2\gets \frac{(2\lambda \alpha\beta/(2\alpha-1))^k/k!}{\sum_{j=0}^{k}(2\lambda \alpha\beta/(2\alpha-1))^j/j!}$
    \If {$P_1\le P_2$}
        \State $m_o=1$
    \ElsIf {$P_1> P_2$}
        \State $m_o=2$
    \EndIf
\ElsIf{$\alpha\ge1.5$}
\State $m_o=1$
\EndIf
\end{algorithmic}
\end{algorithm}

\section{Average System Time }
\label{sec:ave_shift}

The previous section shows that when computing resources are concentrated on a few jobs, the edge system processes each job faster but blocks and sends more jobs to the cloud. Then, the edge system will be unstable even with a lower arrival rate. Therefore, a balance exists when optimizing the job-blocking probability and the job-computing time. Here, we analyze the average system time to evaluate the balance. We separately consider the shifted exponential and Pareto service times in the following and propose an average system time optimization (ASTO) algorithm to verify the effectiveness of the redundant resource allocation.

\subsection{Shifted Exponential Service Time}
According to the expression of the average system time in \eqref{Eq:avetime}, both the cloud time $\expec[T_{cl}]$ and the job-computing time $\expec[T_{job}]$ decide the performance of the $\expec[T_{sys}]$. Considering the complexity of the \eqref{Eq:avetime}, we can not find the optimal replication factor $m$ that minimizes the average system time. However, we observe that spreading computing resources to more jobs may be a better way to decrease the average system time when $\expec[T_{cl}]$ is large; Concentrating the resources on a few jobs may be better when $\expec[T_{job}]$ is large. In the following theorem, we find the optimal replication factor $m$ that minimizes the average system time $\expec[T_{\text{sys}}]$ under certain conditions.

\begin{theorem}
    \label{Thm:shift_ave}
    For the edge system with Poisson($\lambda$) arrivals and $\sexpD(\Delta,\mu)$ service time, when the cloud time satisfies $\expec[T_{\text{cl}}]\ge\frac{1}{\mu}\frac{1-\frac{1}{N}-P_b(N)+\frac{P_b(1)}{N}}{P_b(\lfloor N/2 \rfloor)-P_b(N)}+\Delta$, the average system time $\expec[T_{\text{sys}}]$ reaches the minimum at $m=1$ (i.e., $c=N$); when $\expec[T_{\text{cl}}]\in(\Delta+\frac{1}{N\mu},\Delta+\frac{1}{\mu})$, $\expec[T_{\text{sys}}]$ reaches the minimum at $m=N$ (i.e., $c=1$) under the condition $\lambda\le \frac{N\mu}{(N-2)(\Delta N\mu+1)}$.
\end{theorem}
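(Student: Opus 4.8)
The plan is to treat $c=N/m$ as the single design variable and, substituting \eqref{Eq:time_shift} into \eqref{Eq:avetime}, to study
\[
g(c):=\expec[T_{\text{sys}}(c)]=\expec[T_{\text{job}}(c)]+P_b(c)\bigl(\expec[T_{\text{cl}}]-\expec[T_{\text{job}}(c)]\bigr),
\]
where $\expec[T_{\text{job}}(c)]=\Delta+\tfrac{c}{N\mu}$. Both claims assert where $g$ attains its minimum over $c\in\{1,\dots,N\}$, and in each case I would establish optimality by comparing $g$ at the claimed minimizer against $g(c)$ for every other $c$, using the two monotonicities already available: $\expec[T_{\text{job}}(c)]$ increases in $c$, while $P_b(c)$ strictly decreases in $c$ by Theorem~\ref{Thm:block}.

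For the first claim (minimum at $c=N$) I would form $g(c)-g(N)$ and, since $P_b(c)>P_b(N)$ for $c<N$, divide by the positive factor $P_b(c)-P_b(N)$ to reduce $g(c)\ge g(N)$ to the per-$c$ threshold $\expec[T_{\text{cl}}]\ge R(c)$ with
\[
R(c)=\Delta+\frac{1}{\mu}\,\frac{1-\tfrac{c}{N}\bigl(1-P_b(c)\bigr)-P_b(N)}{P_b(c)-P_b(N)}.
\]
It then suffices to bound $\max_{c<N}R(c)$ by the stated right-hand side. The numerator $1-\tfrac{c}{N}(1-P_b(c))-P_b(N)$ is decreasing in $c$ (the product $\tfrac{c}{N}(1-P_b(c))$ increases, both factors being increasing), so it is at most its value at $c=1$, namely $1-\tfrac1N+\tfrac{P_b(1)}{N}-P_b(N)$ — exactly the numerator in the theorem. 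For the denominator, monotonicity of $P_b$ gives $P_b(c)-P_b(N)\ge P_b(\lfloor N/2\rfloor)-P_b(N)$ for $c\le\lfloor N/2\rfloor$; combining the two bounds yields $R(c)\le$ the claimed threshold on the lower half of the range.

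The main obstacle is the upper half $c>\lfloor N/2\rfloor$, where $P_b(c)-P_b(N)<P_b(\lfloor N/2\rfloor)-P_b(N)$ and the crude denominator bound is lost, so one must instead show the numerator has decayed enough to compensate. Writing $c=N-k$, a short rearrangement gives $R(N-k)=\Delta+\frac{1}{\mu}\bigl(1+\frac{k\,(1-P_b(N-k))}{N\,(P_b(N-k)-P_b(N))}\bigr)$, so the task is to show this stays below the stated bound even as both $P_b(N-k)-P_b(N)$ and $k/N$ shrink toward $c=N$. I expect this to be the genuinely delicate step and to require a quantitative lower bound on the successive gaps $P_b(c)-P_b(N)$, e.g.\ from the recursive Erlang-B structure of \eqref{Eq:blocking}.

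The second claim (minimum at $c=1$) is symmetric but cleaner. Forming $g(c)-g(1)$ for $c\ge2$ and dividing by the positive $P_b(1)-P_b(c)$ reduces $g(c)\ge g(1)$ to $\expec[T_{\text{cl}}]\le S(c)$ with
\[
S(c)=\Delta+\frac{1}{N\mu}\,\frac{c-1+P_b(1)-c\,P_b(c)}{P_b(1)-P_b(c)}.
\]
Since the hypothesis gives $\expec[T_{\text{cl}}]<\Delta+\tfrac1\mu$, it suffices to prove $S(c)\ge\Delta+\tfrac1\mu$, which rearranges to $c-1-(N-1)P_b(1)+(N-c)P_b(c)\ge0$; as $(N-c)P_b(c)\ge0$ and $c-1\ge1$ for $c\ge2$, this holds once $(N-1)P_b(1)\le1$. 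Finally I would discharge that inequality from the load hypothesis: for $c=1$ the system is an Erlang-B loss system, so $P_b(1)=\rho_1/(1+\rho_1)$ with $\rho_1=\lambda\expec[T_{\text{job}}(1)]=\lambda\bigl(\Delta+\tfrac{1}{N\mu}\bigr)$, and $\lambda\le\frac{N\mu}{(N-2)(\Delta N\mu+1)}$ is precisely $\rho_1\le\tfrac{1}{N-2}$, giving $P_b(1)\le\tfrac{1}{N-1}$ and hence $(N-1)P_b(1)\le1$.
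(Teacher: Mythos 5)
Your overall strategy is the same as the paper's: write $\expec[T_{\text{sys}}(N)]\le\expec[T_{\text{sys}}(c)]$ (resp.\ $\expec[T_{\text{sys}}(1)]\le\expec[T_{\text{sys}}(c)]$) as a per-$c$ threshold on $\expec[T_{\text{cl}}]$ and then bound the threshold uniformly using the monotonicity of $\expec[T_{\text{job}}(c)]$ and of $P_b(c)$ from Theorem~\ref{Thm:block}. Your second half is correct and complete: dropping the nonnegative term $(N-c)P_b(c)$ and reducing everything to $(N-1)P_b(1)\le1$ via the Erlang-B form $P_b(1)=\rho_1/(1+\rho_1)$ lands on exactly the same arrival-rate condition the paper derives (the paper instead forces the right-hand side of its $P_b(c)\ge\cdots$ inequality to be nonpositive and then takes the worst case $m=N/2$, which is the same inequality $(N-1)P_b(1)\le 1$ in disguise).

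The one place you stop short is the first claim, where you flag the range $\lfloor N/2\rfloor<c<N$ as a ``genuinely delicate step'' needing quantitative control of the gaps $P_b(c)-P_b(N)$. That difficulty is illusory, and removing it is exactly what makes the paper's one-line bound legitimate: the design variable is the integer replication factor $m$, and $c=N/m$, so the feasible values of $c$ are only $N,\lfloor N/2\rfloor,\lfloor N/3\rfloor,\dots,1$. Any feasible $c<N$ has $m\ge2$ and hence $c\le\lfloor N/2\rfloor$, so the denominator bound $P_b(c)-P_b(N)\ge P_b(\lfloor N/2\rfloor)-P_b(N)$ covers \emph{every} competitor, and your own observations (numerator of $R(c)$ decreasing in $c$, hence maximized at $c=1$; denominator minimized at $c=\lfloor N/2\rfloor$) finish the proof. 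This is also why $\lfloor N/2\rfloor$ appears in the theorem's threshold at all. If one really did allow all integers $c\in\{1,\dots,N\}$ as you assumed, the threshold as stated would not obviously dominate $R(c)$ near $c=N$, so identifying the feasible set correctly is not cosmetic --- it is the missing step. With that single observation added, your argument coincides with the paper's.
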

\begin{proof}
When $c=N$ and $m=1$, the average system time is
    \[
    \expec[T_{\text{sys}}(N)]=(1-P_b(N))(\Delta+\frac{1}{\mu})+P_b(N)\expec[T_{cl}].
    \]
From \eqref{Eq:avetime} and \eqref{Eq:time_shift}, for any $c<N$ we have
    \begin{align*}
        &\expec[T_{\text{sys}}(N)]\le\expec[T_{\text{sys}}(c)]\\
         &\Leftrightarrow \expec[T_{\text{cl}}]\ge\frac{1}{\mu}\frac{1-\frac{1}{m}-P_b(N)+\frac{P_b(c)}{m}}{P_b(c)-P_b(N)}+\Delta.
    \end{align*}
    Since $P_b(c)$ decreases with the increasing $c$ according to Theorem~\ref{Thm:block}, the above inequality holds when
    \[\expec[T_{\text{cl}}]\ge\frac{1}{\mu}\frac{1-\frac{1}{N}-P_b(N)+\frac{P_b(1)}{N}}{P_b(\lfloor N/2 \rfloor)-P_b(N)}+\Delta.\]

    When $c=1$ and $m=N$, the average system time is
    \[\expec[T_{\text{sys}}(1)]=(1-P_b(1))(\Delta+\frac{1}{N\mu})+P_b(1)\expec[T_{\text{cl}}].\]
    Then, for any $c\ge2$ we have
    \begin{align*}
        &\expec[T_{\text{sys}}(1)]<\expec[T_{\text{sys}}(c)]\\
         &\Leftrightarrow \expec[T_{\text{cl}}]\le\frac{1}{\mu}\frac{\frac{1}{m}-\frac{1}{N}+\frac{P_b(1)}{N}-\frac{P_b(c)}{m}}{P_b(1)-P_b(c)}+\Delta.
         \end{align*}
    Since $\expec[T_{\text{cl}}]\le\Delta+\frac{1}{\mu}$, then the above inequality holds when
       \begin{align*}
        &\frac{\frac{1}{m}-\frac{1}{N}+\frac{P_b(1)}{N}-\frac{P_b(c)}{m}}{P_b(1)-P_b(c)}\ge1\\
         &\Leftrightarrow P_b(c)\ge\frac{(1-\frac{1}{N})P_b(1)-\frac{1}{m}+\frac{1}{N}}{1-\frac{1}{m}}.
    \end{align*}
    Similarly to the proof of Theorem~\ref{Thm:block}, the above inequality holds when $(1-\frac{1}{N})P_b(1)-\frac{1}{m}+\frac{1}{N}\le 0$. From the expression of $P_b(c)$ in \eqref{Eq:rewrite_shift}, we have
    \begin{align*}
         &(1-\frac{1}{N})\frac{K(\Delta N\mu+1)}{1+K(\Delta N\mu+1)}-\frac{1}{m}+\frac{1}{N}\le 0\\
         &\Leftrightarrow K(\Delta N\mu+1)\le\frac{1-\frac{m}{N}}{m-1}=\frac{1-\frac{1}{N}}{m-1}-\frac{1}{N}.
    \end{align*}
    Since the replication factor satisfies $m\le\frac{N}{2}$, we have $K\le\frac{1}{(N-2)(\Delta N\mu+1)}$, that is,  $\lambda\le\frac{N\mu}{(N-2)(\Delta N\mu+1)}$. 
\end{proof}
From the above theorem, we find some optimal values of $m$ or $c$ under certain conditions. Unlike the results for the job-blocking probability, Theorem~\ref{Thm:shift_ave} shows that spreading computing resources to more jobs is not always optimal, and the optimal replication factor changes with the system parameters such as $\lambda$, $N$, and $\expec[T_{cl}]$, etc. 

\subsection{Pareto Service Time}
For Pareto service time, we analyze how cloud time $\expec[T_{cl}]$ and job-computing time $\expec[T_{job}]$ affect the average system time. Similar to Theorem~\ref{Thm:shift_ave},  
we find the optimal replication factor $m$ that minimizes the average system time $\expec[T_{\text{sys}}]$ under certain conditions in the following lemma.
\begin{lemma}
    \label{Le:Pareto_ave}
    For the edge system with Poisson($\lambda$) arrivals and $\parD(\beta,\alpha)$ service time, if $P_b(N)\le P_b(\lfloor N/2 \rfloor)$, $\expec[T_{\text{sys}}]$ reaches the minimum at $m=1$ (i.e., $c=N$) when $\expec[T_{cl}]\ge \beta\alpha\frac{(1-P_b(N)/(\alpha-1))-(1-P_b(1))/(\alpha N-1)}{P_b(\lfloor N/2 \rfloor)-P_b(N)}$; otherwise, $\expec[T_{\text{sys}}]$ will not reach the minimum at $m=1$. 
    When $\expec[T_{\text{cl}}]\le \frac{\beta\alpha }{\alpha -1}$, $\expec[T_{\text{sys}}]$ reaches the minimum at $m=N$ (i.e., $c=1$) under the condition $\lambda\le\frac{(\alpha-1)N}{\beta\alpha N(N-2)}$.
\end{lemma}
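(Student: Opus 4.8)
The plan is to mirror the proof of Theorem~\ref{Thm:shift_ave}, substituting the Pareto computing time \eqref{Eq:time_pareto} into \eqref{Eq:avetime} in place of the shifted-exponential one. First I would record the two extreme system times. At $m=1$ (i.e.\ $c=N$),
\[
\expec[T_{\text{sys}}(N)]=(1-P_b(N))\frac{\beta\alpha}{\alpha-1}+P_b(N)\expec[T_{\text{cl}}],
\]
and at $m=N$ (i.e.\ $c=1$),
\[
\expec[T_{\text{sys}}(1)]=(1-P_b(1))\frac{\beta N\alpha}{N\alpha-1}+P_b(1)\expec[T_{\text{cl}}].
\]
Both parts then reduce to comparing these against the generic $\expec[T_{\text{sys}}(c)]=(1-P_b(c))\frac{\beta N\alpha}{N\alpha-c}+P_b(c)\expec[T_{\text{cl}}]$ and isolating $\expec[T_{\text{cl}}]$, exactly as in the shifted case; the new feature is that the monotonicity of $P_b$ is now governed by Lemma~\ref{Le:block_pareto} rather than by Theorem~\ref{Thm:block}.

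For the first part (minimum at $m=1$), I would write $\expec[T_{\text{sys}}(N)]\le\expec[T_{\text{sys}}(c)]$ for every admissible $c<N$ and solve for $\expec[T_{\text{cl}}]$. The key observation is that the hypothesis $P_b(N)\le P_b(\lfloor N/2\rfloor)$, together with Lemma~\ref{Le:block_pareto} (which already forces the blocking minimum to sit at $m=1$ or $m=2$), makes $P_b(N)$ the global minimum over all $c$, so $P_b(N)-P_b(c)\le0$; dividing by this nonpositive quantity flips the inequality into a lower bound on $\expec[T_{\text{cl}}]$. A uniform bound over the admissible $c=N/m$ is then obtained by separately minimizing the subtracted term $(1-P_b(c))\frac{\beta N\alpha}{N\alpha-c}$ (attained at $c=1$) and minimizing the positive denominator $P_b(c)-P_b(N)$ (attained at $c=\lfloor N/2\rfloor$), which yields the stated threshold. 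For the converse claim, I would use that $\expec[T_{\text{job}}(m{=}2)]<\expec[T_{\text{job}}(m{=}1)]$ always holds for Pareto times: if instead $P_b(N)>P_b(\lfloor N/2\rfloor)$, then $m=2$ strictly beats $m=1$ in both arguments of \eqref{Eq:avetime}, and since \eqref{Eq:avetime} is nondecreasing in both $\expec[T_{\text{job}}]$ and $P_b$ whenever $\expec[T_{\text{cl}}]\ge\expec[T_{\text{job}}]$, the minimum cannot be at $m=1$.

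For the second part (minimum at $m=N$), I would form $\expec[T_{\text{sys}}(1)]\le\expec[T_{\text{sys}}(c)]$ for every $c\ge2$. By Lemma~\ref{Le:block_pareto}, $P_b(1)=P_b(m{=}N)$ is the largest blocking probability, so $P_b(1)-P_b(c)\ge0$ and the inequality becomes an \emph{upper} bound on $\expec[T_{\text{cl}}]$. Imposing the hypothesis $\expec[T_{\text{cl}}]\le\frac{\beta\alpha}{\alpha-1}$ reduces the requirement to a pure inequality among blocking probabilities of the form $P_b(c)\ge(\text{expression in }P_b(1),m,N)$, which, as in Theorem~\ref{Thm:shift_ave}, is guaranteed once the corresponding numerator is nonpositive. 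Substituting the explicit Erlang value $P_b(1)=\rho/(1+\rho)$ from \eqref{Eq:blocking} with $\rho=\lambda\beta N\alpha/(N\alpha-1)$ and using $m\le N/2$ then collapses the condition into the stated bound on $\lambda$.

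The main obstacle is precisely the monotonicity step. Unlike the shifted-exponential case, where Theorem~\ref{Thm:block} supplies strict monotonicity of $P_b(c)$ and hence an unambiguous inequality direction and a clean worst case, the Pareto blocking probability is only monotone away from $m=1$ (Lemma~\ref{Le:block_pareto}). Consequently I must (i) carry the extra hypothesis $P_b(N)\le P_b(\lfloor N/2\rfloor)$ to fix the sign of $P_b(N)-P_b(c)$, and (ii) take the uniform bound only over the discrete admissible values $c=N/m$ with $m\mid N$, verifying that the extremal constraints are genuinely attained at $c=1$ and $c=\lfloor N/2\rfloor$. Care with these discrete worst cases, rather than any single algebraic manipulation, is what the argument really hinges on.
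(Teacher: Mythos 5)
Your proposal is correct and takes essentially the same route as the paper's proof: the same pairwise comparisons $\expec[T_{\text{sys}}(N)]\le\expec[T_{\text{sys}}(c)]$ and $\expec[T_{\text{sys}}(1)]\le\expec[T_{\text{sys}}(c)]$, the same reliance on Lemma~\ref{Le:block_pareto} to fix the sign of $P_b(c)-P_b(N)$, the same extremal choices $c=1$ (numerator) and $c=\lfloor N/2\rfloor$ (denominator), and the same collapse to the $\lambda$-bound via $P_b(c)\ge 0$ and nonpositivity of the right-hand side. The only divergence is the ``otherwise'' clause, where you argue via monotonicity of \eqref{Eq:avetime} in $(\expec[T_{\text{job}}],P_b)$, which needs the standing assumption $\expec[T_{\text{cl}}]\ge\expec[T_{\text{job}}]$, whereas the paper shows directly that the right-hand side of \eqref{Eq:proof_pareto} is positive at $c=\lfloor N/2\rfloor$ using $P_b\le\tfrac12$ and $\alpha<1.5$ and therefore does not need that assumption; both arguments are valid.
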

\begin{proof}
    When $c=N$ and $m=1$, the average system time is
    \[
    \expec[T_{\text{sys}}(N)]=(1-P_b(N))\frac{\beta\alpha}{\alpha-1}+P_b(N)\expec[T_{cl}].
    \]
    From \eqref{Eq:avetime} and \eqref{Eq:time_pareto}, for any $c<N$ we have
    \begin{equation}
    \label{Eq:proof_pareto}
    \begin{aligned}
        &\expec[T_{\text{sys}}(N)]\le\expec[T_{\text{sys}}(c)]\\
        &\Leftrightarrow \expec[T_{cl}](P_b(c)-P_b(N))\\
        &\ge (1-P_b(N))\frac{\beta\alpha}{\alpha-1}-(1-P_b(c))\frac{\beta\alpha m}{\alpha m-1}.
    \end{aligned}
    \end{equation}
    According to Lemma~\ref{Le:block_pareto}, when $\alpha\ge 1.5$, $P_b$ increases with $m$. Then, the above inequality holds when
    \[ \expec[T_{cl}]\ge \beta\alpha\frac{(1-P_b(N)/(\alpha-1))-(1-P_b(1))/(\alpha N-1)}{P_b(\lfloor N/2 \rfloor)-P_b(N)} .\]
    When $\alpha< 1.5$, $P_b$ increases with $m$ for $m\ge2$. If $P_b(N)>P_b(\lfloor N/2 \rfloor)$, then $P_b(c)-P_b(N)<0$ for any $c<N$. For the job-blocking probability $P_b(c)$, it is easy to prove that $\frac{1}{2}\ge P_b(c)\ge0$ according to \eqref{Eq:blocking}.
    Thus, $\frac{1-P_b(c)}{1-P_b(N)}<2$. Since $\frac{\beta\alpha}{\alpha-1}< \frac{4\beta\alpha}{2\alpha-1}$ for $\alpha<1.5$, the right side of \eqref{Eq:proof_pareto} is always positive. Then $\expec[T_{\text{sys}}]$ will not reach the minimum at $c=N$.

When $c=1$ and $m=N$, the average system time is
    \[
    \expec[T_{\text{sys}}(1)]=(1-P_b(1))\frac{\beta\alpha m}{\alpha m-1}+P_b(N)\expec[T_{cl}].
    \]
    Then, for any $c\ge 2$ we have
    \begin{align*}
        &\expec[T_{\text{sys}}(1)]\le\expec[T_{\text{sys}}(c)]\\
        &\Leftrightarrow \expec[T_{cl}](P_b(1)-P_b(c))\\
        &\le (1-P_b(c))\frac{\beta\alpha m}{\alpha m-1}-(1-P_b(1))\frac{\beta\alpha N}{\alpha N-1}.
    \end{align*}
    Similar to the proof of Theorem~\ref{Thm:shift_ave}, since  $\expec[T_{\text{cl}}]\le\frac{\beta\alpha }{\alpha -1}$, then the above inequality holds when
    \begin{align*}
        &\frac{m(1-P_b(c))}{\alpha m -1}-\frac{N(1-P_b(1))}{\alpha N-1}\ge \frac{P_b(1)-P_b(c)}{\alpha-1}\\
        & P_b(c)\frac{m-1}{\alpha m -1}\ge \frac{(N-1)P_b(1)}{\alpha N-1}-\frac{(\alpha-1)(N-m)}{(\alpha N-1)(\alpha m -1)}
    \end{align*}
    Since $P_b(c)\ge 0$, the above inequality holds when the right side is negative. Then we have
    \[\lambda\le\frac{(\alpha-1)(N-m)}{\beta\alpha N(m-1)}.\]
    Since $m\le\frac{N}{2}$, $\expec[T_{\text{sys}}]$ reaches the minimum at $c=1$ when $\lambda\le\frac{(\alpha-1)N}{\beta\alpha N(N-2)}$.
\end{proof}
Similarly to the conclusion of Theorem~\ref{Thm:shift_ave}, the optimal replication factor $m$ minimizes the average system time changes with different values of $\expec[T_{\text{cl}}]$ and $\lambda$. Besides, the tail parameter $\alpha$ of the Pareto distribution is also an essential factor that decides the optimal replication factor.

\subsection{Average System Time Optimization}

Here, we propose the average system time optimization algorithm according to the analysis in Theorem~\ref{Thm:shift_ave} and Lemma~\ref{Le:Pareto_ave}. Notice that the previous study is based on the assumption that the edge system is stable. However, when the values of $m$ and $c$ change, the system stability condition may change accordingly. Then, the system may be unstable due to a large arrival rate.

For the blocking system, the system is stable when $\rho=\lambda\expec[T_{\text{job}}]/c\le1$. From \eqref{Eq:time_shift} and \eqref{Eq:time_pareto}, we find the system stability conditions for shifted exponential and Pareto service time separately. 
\[\rho_e(c)=\lambda(\frac{\Delta}{c}+\frac{1}{N\mu}) 
\; \;\text{and}\; \;\rho_p(c)=\frac{\lambda\beta\alpha N}{\alpha N c-c^2}.\]
From the above equations, we know that the maximum arrival rate $\lambda_m$ under which the system is stable changes with the number of groups $c$. Then we have
\begin{equation*}
\lambda_m(c) = \begin{cases}
\frac{\mu Nc}{\Delta\mu N+c}  & \text{for} \quad  \sexpD(\Delta,\lambda) \\
\frac{\alpha N c-c^2}{\beta\alpha N} & \text{for} \quad \parD(\beta,\alpha)
\end{cases} \label{eqn:toy_X_dist}
\end{equation*}
When the arrival rate $\lambda\le \lambda_m$, the results in Theorem~\ref{Thm:shift_ave} and Lemma~\ref{Le:Pareto_ave} can be applied to the algorithm directly. When $\lambda> \lambda_m$, the average system time from \eqref{Eq:avetime} does not hold since the blocking system is unstable. Therefore, we modify the expression of the average system time under this condition.
\begin{equation}
    \label{Eq:aver_modify}
    \begin{aligned}
        \expec[T_{\text{sys}}(c)]&=\frac{\lambda_m(c)(1-P_{max}(c))}{\lambda}\expec[T_{\text{job}}(c)]\\&+(1-\frac{\lambda_m(c)}{\lambda}(1-P_{max}(c)))\expec[T_{\text{cl}}].
    \end{aligned}
\end{equation}
Where $P_{max}(c)$ is defined as the maximum job-blocking probability and the formula is 
\begin{equation}
    \label{Eq:blocking_max}
    P_{max}(c)=\frac{(c)^c/c!}{\sum_{j=0}^{c}(c)^j/j!}.
\end{equation}
Finally, we propose the ASTO algorithm in the following.

\begin{algorithm}[hbt]
\caption{Average System Time Optimization (ASTO)}
\label{Alg:2}
\begin{algorithmic}[1]
\Require Number of workers $N$, job arrival rate $\lambda$, cloud time $T_{cl}$, $\sexpD(\Delta,\mu)$ service time
\Ensure Optimal replication factor $m_o$
\State $c_1\gets N$, $c_2\gets 1$ and $c_3=\lfloor N/2 \rfloor$
\State Calculate $P_1$, $P_2$ and $P_3$ for each $c_i$ ($i\in\{1,2,3\}$) according to \eqref{Eq:rewrite_shift} 
\State $T_1\gets\frac{1}{\mu}\frac{1-\frac{1}{N}-P_1+\frac{P_2}{N}}{P_3-P_1}+\Delta$ 
\State $\lambda_1\gets \frac{N\mu}{(N-2)(\Delta N\mu+1)}$
\If{$T_{cl}\ge T_1$}
    \State $m_o=1$
\ElsIf{$T_{cl}\le \Delta+\frac{1}{\mu}$ and $\lambda\le\lambda_1$}
    \State $m_o=N$
\Else
\For {$m \gets 1$ to $N$}
    \State $c\gets \lfloor N/m \rfloor$
    \State $\lambda_m\gets\frac{\mu Nc}{\Delta\mu N+c}$
    \State $T_j\gets \Delta+1/m\mu$ 
    \If{$\lambda\le\lambda_m$}
        \State $P\gets \frac{(\lambda T_j)^c/c!}{\sum_{j=0}^{c}(\lambda T_j)^j/j!}$
        \State $T_{sys}(m)\gets (1-P)T_j+PT_{cl}$
    \ElsIf{$\lambda>\lambda_m$}
        \State $P\gets \frac{(c)^c/c!}{\sum_{j=0}^{c}(c)^j/j!}$
        \State $T_{sys}(m)\gets \frac{\lambda_m }{\lambda}(1-P)T_j+(1-\frac{\lambda_m }{\lambda}+\frac{\lambda_m }{\lambda}P)T_{cl}$
    \EndIf
\EndFor
\State $M\gets {1,2,...,N}$
\State $m_o=\arg\min\limits_{v\in M} T_{sys}(v)$
\EndIf
\end{algorithmic}
\end{algorithm}

\section{Average System Time Analysis for Different Parameters}
\label{Sec:para}

\subsection{Cloud Time and Job Arrival Rate}
The edge system can provide better computing performance by reducing the communication time between the users and the cloud. If $\expec[T_{\text{cl}}]\le \expec[T_{\text{job}}]$, we consider that the edge system does not have enough computing resources for each job and should always send all the jobs to the cloud. Therefore, when $\expec[T_{\text{cl}}]\in(\expec[(T_{\text{job}}(m=N)],\expec[T_{\text{job}}(m=1)])$, the optimal replication factor $m$ should satisfy $\expec[T_{\text{job}}(m)]<\expec[T_{\text{cl}}]$. It is always better to concentrate computing resources on a few jobs.

When cloud time satisfies $\expec[T_{\text{cl}}]>\expec[T_{\text{job}}(m=1)]$, it is always faster to process the job in the edge system. From \eqref{Eq:blocking_max}, when the edge system is stable, it is easy to prove 
\[P_b(c)\le P_{max}(c)=\frac{1}{\sum_{j=0}^{c}(c!/j!)/(c)^{c-j}}\le\frac{1}{2}.\]
Thus, we have $1-P_b(c)\ge P_b(c)$, which means more jobs will be processed in the edge system.
From the expression of average system time \eqref{Eq:avetime}, when $\expec[T_{\text{cl}}]$ is relatively small, minimizing the job-computing time may reduce the average system time. Thus, concentrating computing resources on a few jobs may improve performance. When $\expec[T_{\text{cl}}]$ is large, reducing the blocked jobs will reduce the average system time. Thus, spreading computing resources to more jobs may improve performance. According to Theorem~\ref{Thm:shift_ave}, when $\expec[T_{\text{cl}}]$ is sufficiently large, maximum spreading computing resources ($m=1$) is optimal for shifted exponential service time. However, according to Lemma~\ref{Le:Pareto_ave}, for Pareto service time, when the tail parameter $\alpha$ is small, the straggler problem in the edge system is severe, so $m=2$ is optimal to minimize the average system time.

The job arrival rate $\lambda$ is also an essential factor to affect the average system time. By \eqref{Eq:blocking}, when $\lambda$ is small, the job-blocking probability is close to $0$. Then, the edge system will send fewer jobs to the cloud. From \eqref{Eq:avetime}, we see that minimizing $\expec[T_{\text{sys}}]$ is equivalent to minimize $(1-P_b(c))\expec[T_{\text{job}}]$. Thus, concentrating computing resources on a few jobs is a better strategy. Furthermore, by Theorem~\ref{Thm:shift_ave} and Lemma~\ref{Le:Pareto_ave}, when the cloud time $\expec[T_{\text{cl}}]<\expec[T_{\text{job}}(m=1)]$, maximum concentrating computing resources ($m=N$) is optimal. 
When $\lambda$ is large, more jobs will be sent to the cloud. Since $1-P_b(c)\ge P_b(c)$, the optimal $m$ that minimizes the average system time will depend on the value of $\expec[T_{\text{cl}}]$.

\subsection{Number of Workers in Edge System}
\label{Sub:largeN}
The number of workers $N$ is one of the most critical parameters that decides the system's performance. As the increase of $N$, since $N=mc$, when the replication factor $m$ increases with $N$, the job-computing time will decrease with the increasing $N$ according to \eqref{Eq:time_shift} and \eqref{Eq:time_pareto}; when the number of groups $c$ increases with $N$, the job-blocking probability will decrease with increasing $N$ according to \eqref{Eq:blocking}. 
Therefore, it is necessary to study how the $m$ and $c$ increase with $N$ so that the average system time can reach the minimum. Here, we consider three possible scenarios:

\begin{enumerate}[label=\noindent Scenario \arabic*:,leftmargin=*]
\item $m$ increases with $N$, and $c$ remains the same.

\item  $c$ increases with $N$, and $m$ remains the same.

\item both $m$ and $c$ increase with $N$.
\end{enumerate}
 We will analyze the three scenarios under the condition that the number of workers $N$ goes to infinity. The scenarios change as follows:
In \noindent Scenario 1, a constant $c_1$ exists that $c$ is always smaller than $c_1$. 
In \noindent Scenario 2, a constant $m_1$ exists that $m$ is always smaller than $m_1$. In \noindent Scenario 3, the replication factor $m$ and the number of groups $c$ go to infinity.
First, we consider the shifted exponential service time. We find the minimum average system time and optimal replication factor under three scenarios in the following theorem.

\begin{theorem}
\label{Thm:sce}
    In the edge system with $\sexpD(\Delta,\mu)$ service time, as the number of workers $N$ goes to infinity, the average system time reaches the minimum at $\Delta$ when $m$ and $c$ go to infinity accordingly.
\end{theorem}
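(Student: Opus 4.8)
The plan is to show that $\Delta$ is a universal lower bound for the average system time and that this bound is attained in the limit precisely under Scenario~3. First I would establish the lower bound. By \eqref{Eq:time_shift} the job-computing time satisfies $\expec[T_{\text{job}}]=\Delta+\frac{1}{m\mu}\ge\Delta$, and the model assumes the cloud time dominates, so $\expec[T_{\text{cl}}]\ge\expec[T_{\text{job}}]\ge\Delta$. Since \eqref{Eq:avetime} is a convex combination of these two quantities,
\[
\expec[T_{\text{sys}}]=(1-P_b)\expec[T_{\text{job}}]+P_b\expec[T_{\text{cl}}]\ge(1-P_b)\Delta+P_b\Delta=\Delta,
\]
so no choice of $m$ and $c$ can do better than $\Delta$.

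Next I would show that Scenario~3 drives both factors in \eqref{Eq:avetime} to their extreme values. Because $m\to\infty$, the computing time collapses to the shift, $\expec[T_{\text{job}}]=\Delta+\frac{1}{m\mu}\to\Delta$, so the offered load $a:=\lambda\expec[T_{\text{job}}]=c\rho$ converges to the finite constant $\lambda\Delta$ and is bounded above by some $A$ uniformly in $c$. I would then estimate the Erlang-B blocking probability \eqref{Eq:blocking} by retaining only the $j=0$ term in the denominator:
\[
P_b=\frac{a^c/c!}{\sum_{j=0}^{c}a^j/j!}\le\frac{a^c}{c!}\le\frac{A^c}{c!}\to0\quad\text{as }c\to\infty,
\]
where the super-exponential growth of $c!$ beats the fixed base $A$. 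The stability constraint $\rho=a/c\le1$ is automatically satisfied for large $c$ since $a$ stays bounded.

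Finally, combining the two limits in \eqref{Eq:avetime} gives $\expec[T_{\text{sys}}]\to(1-0)\Delta+0\cdot\expec[T_{\text{cl}}]=\Delta$, matching the lower bound. To confirm that the minimum is reached specifically under Scenario~3 rather than Scenario~1 or~2, I would note that bounded $c$ (Scenario~1) keeps $P_b$ bounded away from $0$, so $\expec[T_{\text{sys}}]$ stays strictly above $\Delta$, while bounded $m$ (Scenario~2) keeps $\expec[T_{\text{job}}]\ge\Delta+\frac{1}{m_1\mu}>\Delta$, again preventing convergence to $\Delta$. The hard part will be the blocking-probability estimate: one must first secure that the offered load $a=\lambda\expec[T_{\text{job}}]$ remains bounded as $c$ grows, which relies on $m\to\infty$ pulling $\expec[T_{\text{job}}]$ down to the constant $\Delta$; once this boundedness is established, the bound $P_b\le a^c/c!$ makes $P_b\to0$ immediate and the remainder is routine.
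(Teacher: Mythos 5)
Your proposal is correct and follows essentially the same route as the paper: under Scenario~3 the offered load $\lambda\expec[T_{\text{job}}]$ stays bounded, the Erlang-B blocking probability is killed by the super-exponential growth of $c!$, and $\expec[T_{\text{job}}]\to\Delta$, while Scenarios~1 and~2 leave residual terms. Your explicit lower bound $\expec[T_{\text{sys}}]\ge\Delta$ is a small addition the paper omits, but the substance of the argument is the same.
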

\begin{proof}
      We adopt the expression of the job-blocking probability from \eqref{Eq:time_shift} and \eqref{Eq:blocking}, then
    \[P_{b}=\frac{(\lambda \Delta+\frac{\lambda }{m\mu})^{c}/c!}{\sum_{j=0}^{c}(\lambda \Delta+\frac{\lambda }{m\mu})^j/j!}.\]
    By \eqref{Eq:avetime} and \eqref{Eq:time_shift}, the average system time is
    \begin{align*}
        \expec[T_{\text{sys}}]= (\Delta+\frac{1}{m\mu})(1-P_{b})+\expec[T_{\text{cl}}]P_{b}.
    \end{align*}
\noindent Scenario 1: 
Since we assume that $c$ is finite, we have 
\begin{align*}
    \lim\limits_{m \to \infty}P_{b}=\frac{ \lim\limits_{m \to \infty}(\lambda \Delta+\frac{\lambda }{m\mu})^{c}/c!}{\lim\limits_{m \to \infty}\sum_{j=0}^{c}(\lambda \Delta+\frac{\lambda }{m\mu})^j/j!}
    =\frac{(\lambda\Delta)^c/c!}{\sum_{j=0}^{c}(\lambda\Delta)^j/j!}.
\end{align*}
Since $\expec[T_{\text{cl}}]$ is a constant and $c\le c_1$,
\begin{align*}
     \lim\limits_{m \to \infty} \expec[T_{\text{sys}}]=\Delta+\frac{(\expec[T_{\text{cl}}]-\Delta)(\lambda\Delta)^{c_1}/c_1!}{\sum_{j=0}^{c_1}(\lambda\Delta)^j/j!}.
\end{align*}   

\noindent Scenario 2: 
Since $m$ and $\Delta$ are finite, $\lambda\Delta+\frac{\lambda}{m\mu}$ is also finite. Then, we have 
\[\lim\limits_{c\to \infty}(\lambda \Delta+\frac{\lambda }{m\mu})^{c}/c!=0\]
and
\[\lim\limits_{c \to \infty}\sum_{j=0}^{c}(\lambda\Delta+\frac{\lambda }{m\mu})^j/j!=e^{\lambda\Delta+\frac{\lambda}{m\mu}}.\]
Thus, we have $\lim\limits_{c \to \infty}P_{b}=0$. Since $\expec[T_{\text{cl}}]$ is a constant and $m\le m_1$,
\begin{align*}
     \lim\limits_{c \to \infty} \expec[T_{\text{sys}}]=\Delta+\frac{1}{m_1\mu}.
\end{align*}

\noindent Scenario 3:
The job-blocking probability satisfies $P_b\ge0$; thus  
\begin{align*}
    &P_{b}\le\frac{(\lambda \Delta+\frac{\lambda }{m\mu})^{c}/c!}{1+(\lambda \Delta+\frac{\lambda }{m\mu})^c/c!}\\
    &= \frac{1}{1+c!/(\lambda \Delta+\frac{\lambda }{m\mu})^{c}}
    \le \frac{1}{1+c!/(\lambda \Delta+\frac{\lambda }{\mu})^{c}}.
\end{align*}
As $c$ goes to infinity, $c!$ increases much faster than $ (\lambda \Delta+\frac{\lambda }{\mu})^{c}$. Thus, we have
\[ \lim\limits_{\substack{c \to \infty \\ m \to \infty}}\frac{1}{1+c!/(\lambda \Delta+\frac{\lambda }{\mu})^{c}}=0.\]
Thus, we have $\lim\limits_{\substack{c \to \infty \\ m \to \infty}}P_b=0$.
Finally, we get
\[\lim\limits_{\substack{c \to \infty \\ m \to \infty}} \expec[T_{\text{sys}}]=\lim\limits_{ m \to \infty} (\Delta +\frac{1}{m\mu})=\Delta.\]

\end{proof}

In the proof of Theorem~\ref{Thm:sce}, we can compare the minimum values of $\expec[T_{\text{sys}}]$, it is evident that the minimum average system time with  Scenario 3 reaches the overall minimum at $\Delta$. Therefore, to get the best system performance, we should let the replication factor $m$ and the number of groups $c$ increase with $N$. When considering the exponential service time (special case when $\Delta=0$), it is not hard to see  Scenario 1 performing similarly to Scenario 3. Then, we only need to let the replication factor $m$ increase with $N$ to achieve the minimum average system time.

Next, we consider the Pareto service time. Similarly, in the following lemma, we find the minimum average system time and optimal replication factor under three scenarios.

\begin{lemma}
\label{Lm:sce_pareto}
    In the edge system with $\sexpD(\Delta,\mu)$ service time, as the number of workers $N$ goes to infinity, the average system time reaches the minimum at $\beta$ when $m$ and $c$ go to infinity accordingly.
\end{lemma}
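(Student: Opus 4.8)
The plan is to mirror the three-scenario argument of Theorem~\ref{Thm:sce}, replacing the shifted-exponential job time by the Pareto one (the statement's ``$\sexpD(\Delta,\mu)$'' is evidently a typo for $\parD(\beta,\alpha)$, since the claimed minimum is $\beta$). From \eqref{Eq:time_pareto} the expected job-computing time is $\expec[T_{\text{job}}(m)]=\frac{\beta m\alpha}{m\alpha-1}$, which is decreasing in $m$, tends to $\beta$ as $m\to\infty$, and is bounded above by its value at $m=1$, namely $\frac{\beta\alpha}{\alpha-1}$. Substituting this into \eqref{Eq:blocking} with offered load $a=\lambda\expec[T_{\text{job}}(m)]$ gives $P_b=\frac{a^{c}/c!}{\sum_{j=0}^{c}a^{j}/j!}$, and \eqref{Eq:avetime} yields $\expec[T_{\text{sys}}]=(1-P_b)\expec[T_{\text{job}}]+P_b\expec[T_{\text{cl}}]$. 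I would first record these three identities together with the uniform bound $a\le \lambda\beta\alpha/(\alpha-1)$, which is the Pareto analogue of the bound $\lambda\Delta+\lambda/\mu$ used in Theorem~\ref{Thm:sce}.

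Next I would treat the two ``partial'' scenarios and show each limit strictly exceeds $\beta$. In Scenario 1 ($m\to\infty$, $c\le c_1$ fixed), $\expec[T_{\text{job}}]\to\beta$ and $a\to\lambda\beta$, so $P_b$ converges to the positive constant $P^{*}=\frac{(\lambda\beta)^{c_1}/c_1!}{\sum_{j=0}^{c_1}(\lambda\beta)^{j}/j!}$ and hence $\expec[T_{\text{sys}}]\to\beta+P^{*}(\expec[T_{\text{cl}}]-\beta)>\beta$, using that the cloud time exceeds the fastest edge time. In Scenario 2 ($c\to\infty$, $m\le m_1$ fixed), the offered load $a$ is a fixed finite number, so $a^{c}/c!\to0$ while the denominator tends to $e^{a}$, giving $P_b\to0$ and $\expec[T_{\text{sys}}]\to\expec[T_{\text{job}}(m_1)]=\frac{\beta m_1\alpha}{m_1\alpha-1}>\beta$.

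The decisive case is Scenario 3 ($m,c\to\infty$). Here I would reuse the elementary sandwiching from Theorem~\ref{Thm:sce}: since $P_b\ge0$ and the numerator of $P_b$ is one term of the denominator, $P_b\le \frac{a^{c}/c!}{1+a^{c}/c!}=\frac{1}{1+c!/a^{c}}\le \frac{1}{1+c!/(\lambda\beta\alpha/(\alpha-1))^{c}}$, where the last step invokes the uniform bound on $a$. Because $c!$ dominates any fixed exponential $(\lambda\beta\alpha/(\alpha-1))^{c}$ as $c\to\infty$, the right-hand side vanishes, so $P_b\to0$; combined with $\expec[T_{\text{job}}]\to\beta$ this gives $\expec[T_{\text{sys}}]\to\beta$. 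Comparing the three limits, Scenario 3 attains the overall minimum $\beta$, which establishes the lemma. The only genuinely delicate point is the Scenario 3 bound: one must secure a constant, $m$-independent cap on the offered load \emph{before} invoking factorial-versus-exponential growth, and this is exactly what the monotonicity of $\expec[T_{\text{job}}(m)]$ supplies; the remaining algebra is routine.
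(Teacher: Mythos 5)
Your proposal is correct and follows essentially the same route as the paper's own proof: the same three-scenario decomposition, the same limits for Scenarios 1 and 2, and the same sandwich bound $P_b\le \bigl(1+c!/(\lambda\beta\alpha/(\alpha-1))^{c}\bigr)^{-1}$ in Scenario 3 via the uniform cap on the offered load. You correctly identify the typo in the statement (the service time should be $\parD(\beta,\alpha)$) and add only the minor observation that the partial-scenario limits strictly exceed $\beta$, which the paper leaves implicit.
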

\begin{proof}
    As in the proof of Theorem~\ref{Thm:sce}, we adopt the expression of the job-blocking probability from \eqref{Eq:time_pareto} and \eqref{Eq:blocking}. Then
    \[P_{b}=\frac{(\frac{\lambda\beta\alpha m}{\alpha m-1})^{c}/c!}{\sum_{j=0}^{c}(\frac{\lambda\beta\alpha m}{\alpha m-1})^j/j!}.\]
    By \eqref{Eq:avetime} and \eqref{Eq:time_pareto}, the average system time is
    \begin{align*}
\expec[T_{\text{sys}}]= (\frac{\beta\alpha m}{\alpha m-1})(1-P_{b})+\expec[T_{\text{cl}}]P_{b}.
    \end{align*}
\noindent Scenario 1: 
Since we assume that $c$ is finite, we have 
\begin{align*}
    \lim\limits_{m \to \infty}P_{b}=\frac{ \lim\limits_{m \to \infty}(\frac{\lambda\beta\alpha m}{\alpha m-1})^{c}/c!}{\lim\limits_{m \to \infty}\sum_{j=0}^{c}(\frac{\lambda\beta\alpha m}{\alpha m-1})^j/j!}
    =\frac{(\lambda\beta)^c/c!}{\sum_{j=0}^{c}(\lambda\beta)^j/j!}.
\end{align*}
Since $\expec[T_{\text{cl}}]$ is a constant and $c\le c_1$,
\begin{align*}
     \lim\limits_{N \to \infty} \expec[T_{\text{sys}}]=\beta+\frac{(\expec[T_{\text{cl}}]-\beta)(\lambda\beta)^{c_1}/c_1!}{\sum_{j=0}^{c_1}(\lambda\beta)^j/j!}.
\end{align*}   

\noindent Scenario 2: 
Similarly to the proof in Theorem~\ref{Thm:sce}, we have $\lim\limits_{c\to \infty}(\frac{\lambda\beta\alpha m}{\alpha m-1})^{c}/c!=0$ and $\lim\limits_{c \to \infty}\sum_{j=0}^{c}(\frac{\lambda\beta\alpha m}{\alpha m-1})^j/j!=e^{\frac{\lambda\beta\alpha m}{\alpha m-1}}$. Thus, the job-blocking probability satisfies $\lim\limits_{c \to \infty}P_{b}=0$. Since $m\le m_1$
\begin{align*}
     \lim\limits_{c \to \infty} \expec[T_{\text{sys}}]=\frac{\beta\alpha m_1}{\alpha m_1-1}.
\end{align*}

\noindent Scenario 3:
The job-blocking probability satisfies $P_b\ge0$, and
\begin{align*}
    P_{b}\le \frac{1}{1+c!/(\frac{\lambda\beta\alpha m}{\alpha m-1})^{c}}
    \le \frac{1}{1+c!/(\frac{\lambda\beta\alpha }{\alpha -1})^{c}}.
\end{align*}
As $c$ goes to infinity, 
 $\lim\limits_{\substack{c \to \infty \\ m \to \infty}}\frac{1}{1+c!/(\frac{\lambda\beta\alpha }{\alpha -1})^{c}}=0$.
Thus, we have $\lim\limits_{\substack{c \to \infty \\ m \to \infty}}P_{b}=0$. Therefore, 
\[\lim\limits_{\substack{c \to \infty \\ m \to \infty}} \expec[T_{\text{sys}}]=\lim\limits_{ m \to \infty} \frac{\beta\alpha m }{\alpha m -1}=\beta.\]
\end{proof}
From Lemma~\ref{Lm:sce_pareto}, we obtain similar results to Theorem~\ref{Thm:sce}, that is, the minimum average system time with \noindent Scenario 3 reaches the overall minimum at $\beta$. Therefore, to get the best system performance, we should let the replication factor $m$ and the number of groups $c$ increase with $N$.

\section{Simulation and Numerical Results}

In this section, we first verify the effectiveness of redundancy management in improving system performance. We evaluate the performance of the BPO and ASTO algorithms for the shifted exponential and Pareto service times separately. The benchmark does not apply any redundant computing resource allocation strategy, so the replication factor is $m=1$, and the number of groups is $c=N$. 

In Figure~\ref{fig:sim}, we evaluate the average system time $\expec[T_{sys}]$ versus the job arrival rate $\lambda$ for different algorithms. We consider $N=24$ workers in the edge system and the cloud time $T_{cl}=15$. In the upper graph, the system uses $\sexpD(3,0.2)$ service time; in the lower graph, the system uses $\parD(1.5,1.2)$ service time. We consider $500$ arrivals and calculate the average system time for each job. Then, we simulate the system $100$ times to get the average system time.
\begin{figure}[htb]
    \centering
    \includegraphics[width=0.48\textwidth]{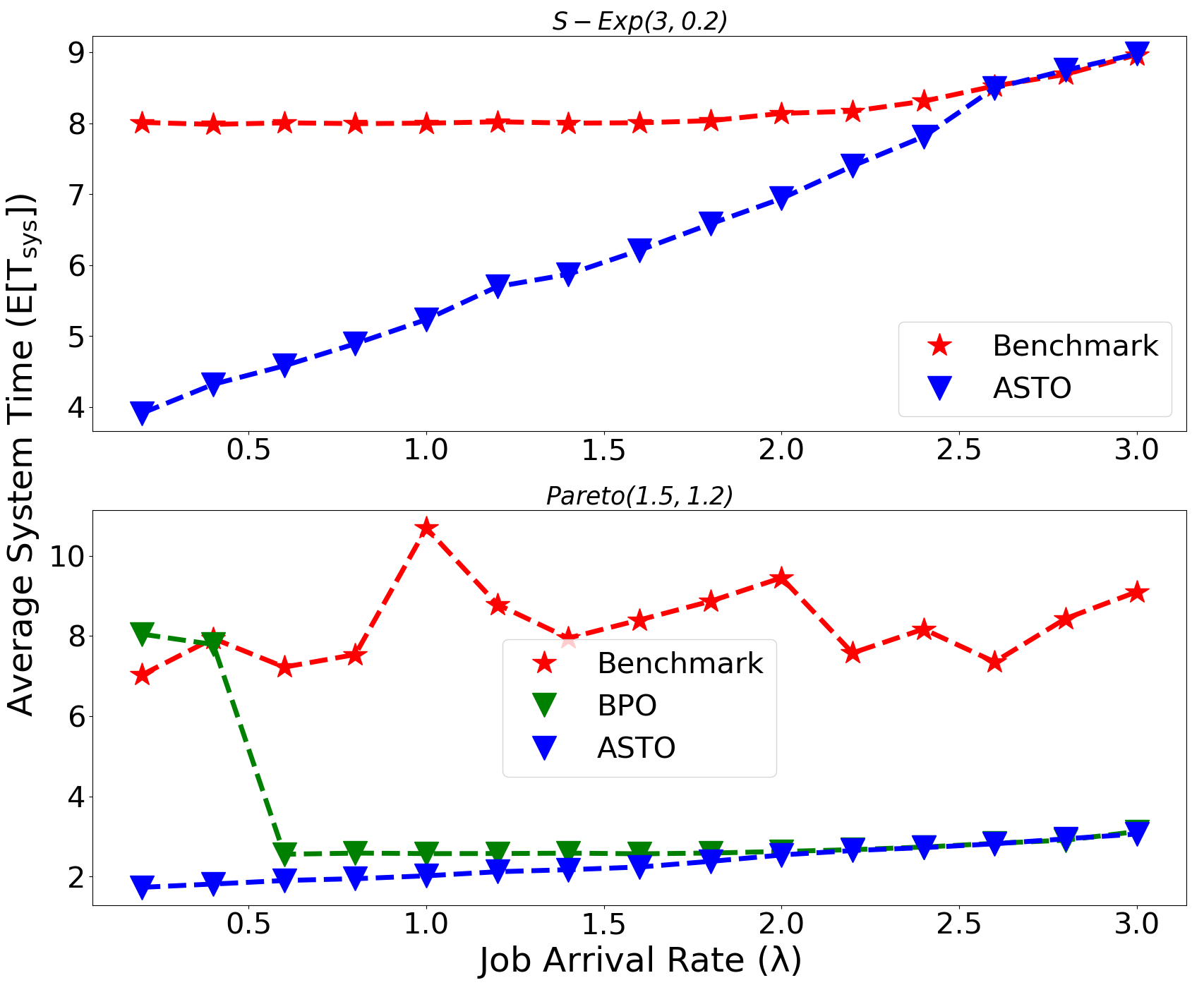}
    \caption{ The average system time $\expec[T_{sys}]$ versus the arrival rate $\lambda$ for different algorithms. The parameter settings are $N=24$ and $T_{cl}=15$. The service time distribution are $\sexpD(3,0.2)$ (upper) and $\parD(1.5,1.2)$ (lower).}
    \label{fig:sim}
\end{figure}

The upper graph only compares the ASTO algorithm with the benchmark since the BPO algorithm performs the same as the benchmark under shifted exponential service time. When $\lambda\le 2.5$, the benchmark almost does not change with the arrival rate, and the ASTO algorithm significantly outperforms the benchmark, but the improvement decreases with the arrival rate. When $\lambda>2.5$, both algorithms perform the same and increase with the arrival rate. We conclude that the ASTO algorithm can significantly improve system performance when the arrival rate is relatively small. 

The lower graph compares the ASTO and BPO algorithms with the benchmark.
When $\lambda\le 2$, the ASTO algorithm can reduce the average system time by about $65\%$ compared to the benchmark and by about $15\%$ compared to the BPO algorithm. When $\lambda>2$, ASTO and BPO algorithms perform the same and are better than the benchmark. When $\lambda\le 0.5$, the BPO algorithm performs similarly to the benchmark. 
We conclude that the ASTO algorithm significantly outperforms the other two algorithms. The BPO algorithm is better than the benchmark for relatively large job arrival rate scenarios.

Next, we numerically analyze the different parameters that affect the system performance. We consider $N=24$ workers in the edge system and evaluate the expressions of the average system time for $\sexpD(3,0.2)$ service time and $\parD(1.5,1.2)$ service time separately. 

In Figure~\ref{fig:time_lambda}, we evaluate \eqref{Eq:avetime} and \eqref{Eq:aver_modify} to see how the average system time changes with the replication factor. We consider that the cloud time $\expec[T_{\text{cl}}]=8$ is smaller than the maximum job-computing time for both service time distributions and plot  $\expec[T_{\text{sys}}]$ versus $m$ for different values of $\lambda\in \{0.1,1,5\}$. 
\begin{figure}[hbt]
	\centering
	\includegraphics[width=0.48\textwidth]{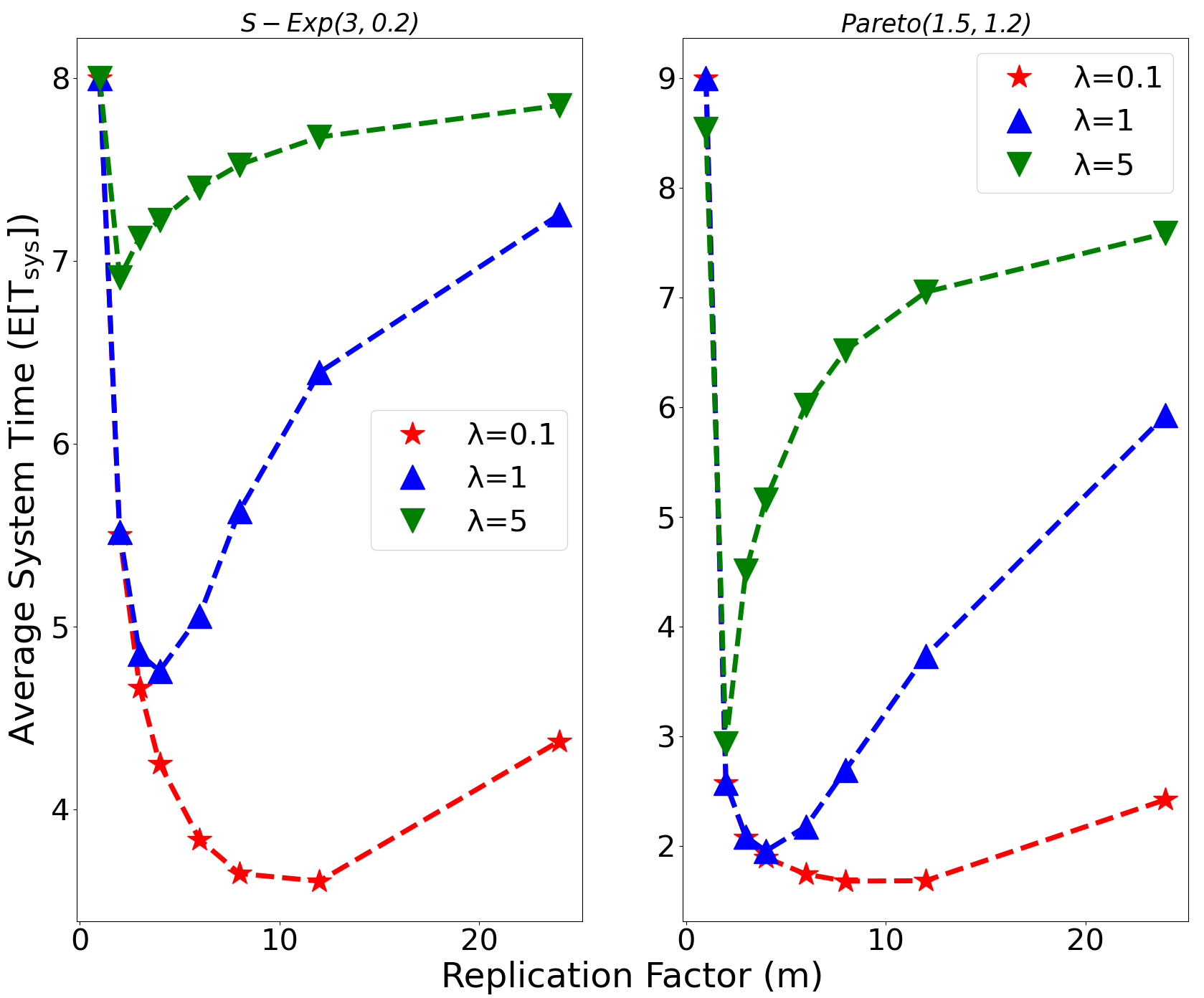}~
	\caption{ The average system time $\expec[T_{sys}]$ versus the arrival rate $\lambda$ for different values of $\lambda\in \{0.1,1,5\}$. The parameter settings are $N=24$ and $T_{cl}=8$. The service time distribution are $\sexpD(3,0.2)$ (left) and $\parD(1.5,1.2)$ (right). Increasing the replication factor properly leads to a smaller average system time when $\lambda$ is small.}
	\label{fig:time_lambda}
\end{figure}
The figure shows $\expec[T_{\text{sys}}]$ decreases with the increasing $m$, and then increases with $m$. So there exists an optimal value of $m$ between $1$ and $N$. Compared to different values of $\lambda$, the optimal $m$ decreases with the increasing $\lambda$, and the minimum value of $\expec[T_{\text{sys}}]$ increases significantly with $\lambda$. From the results, we have the following conclusions. Finding a proper resource allocation strategy can significantly reduce the average system time. With the increase in the arrival rate, the strategy becomes ineffective because the workers in the edge system are always busy in this situation.

In Figure~\ref{fig:time_cloud}, we plot the average system time  $\expec[T_{\text{sys}}]$ versus the replication factor $m$ for different values of $\expec[T_{\text{cl}}]\in\{15,50,100\}$. We consider the arrival rate $\lambda=1.5$.
\begin{figure}[hbt]
	\centering
	\includegraphics[width=0.48\textwidth]{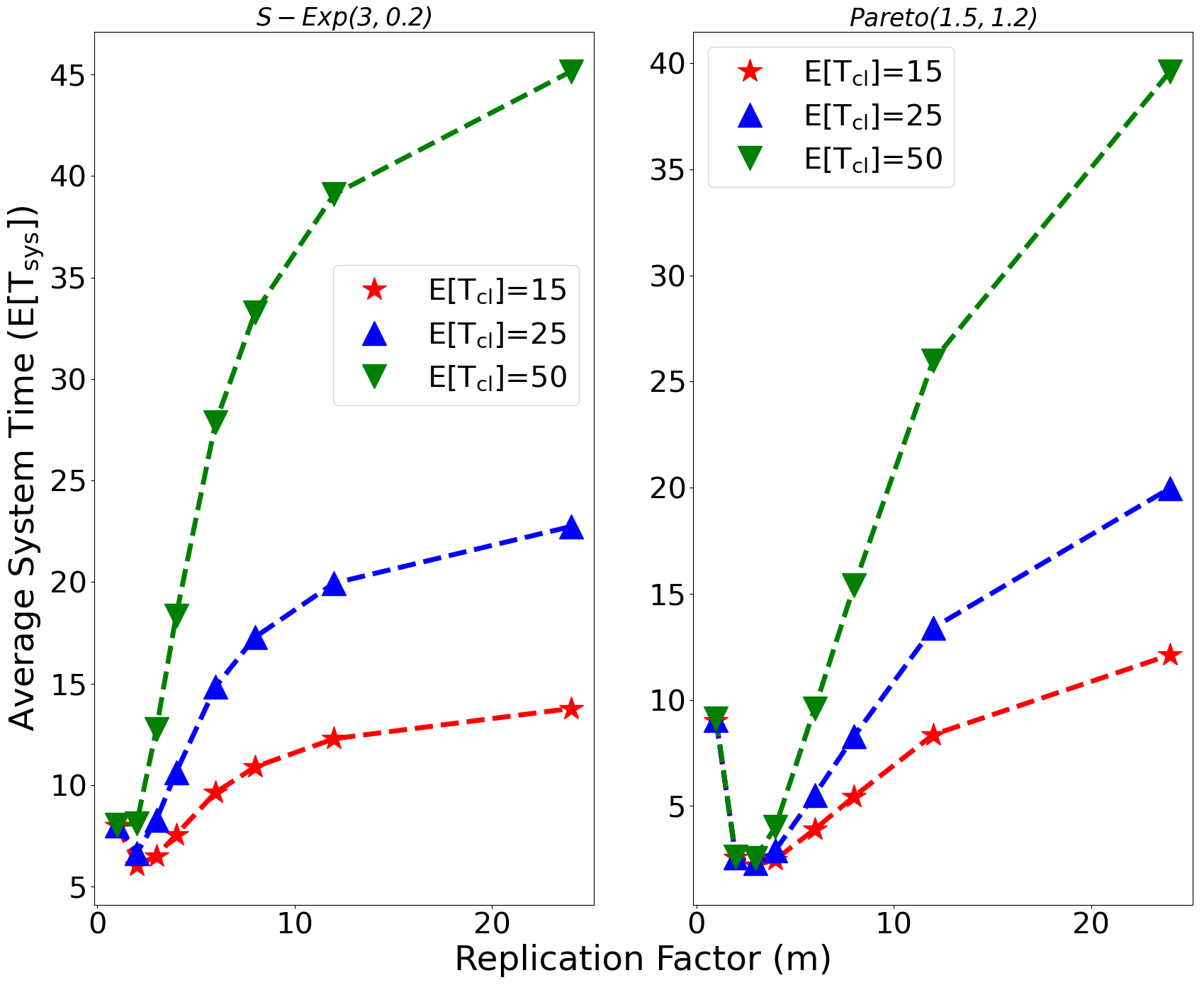}~
	\caption{ The average system time $\expec[T_{sys}]$ versus the arrival rate $\lambda$ for different values of $\expec[T_{\text{cl}}]\in \{15,50,10\}$. The parameter settings are $N=24$ and $\lambda=1.5$. The service time distribution are $\sexpD(3,0.2)$ (left) and $\parD(1.5,1.2)$ (right). When $\expec[T_{\text{cl}}]$ is small, introducing proper redundancy can decrease the average system time.}
	\label{fig:time_cloud}
\end{figure}
The figure shows that the optimal value of $m$ increases with the increasing $\expec[T_{\text{cl}}]$. For shifted exponential service time, $m=1$ is optimal when $\expec[T_{\text{cl}}]=50$; But, for Pareto service time, $m=1$ is not optimal even when $\expec[T_{\text{cl}}]$ is large. The observations are consistent with the conclusions in Theorem~\ref{Thm:shift_ave} and Lemma~\ref{Le:Pareto_ave}.
When $\expec[T_{\text{cl}}]$ is large enough, minimizing $\expec[T_{\text{sys}}]$ is equivalent to minimize the job-blocking probability $P_b$. 

Finally, we numerically analyze the optimal replication factor $m$ and the number of groups $c$ change with the number of workers $N$. 
In Figure~\ref{fig:Opt_c_m}, we evaluate \eqref{Eq:time_shift} and \eqref{Eq:avetime} to find the optimal $c$ and $m$ that minimize the average system time. Then we plot the optimal $c$ versus $N$ and the optimal $m$ versus $N$ for both service times $\sexpD(3,0.2)$ and $\parD(1.5,1.2)$. 
We consider $N=24$ workers, the arrival rate $\lambda=1.5$, the cloud time $\expec[T_{\text{cl}}]=10$. Since $m$ and $c$ are integers, we assume $c=\lfloor \frac{N}{m}\rfloor$. Although the actual values of $\expec[T_{\text{sys}}]$ may become small, they do not affect the conclusions of the numerical analysis.
\begin{figure}[hbt]
    \centering
    \includegraphics[width=0.48\textwidth]{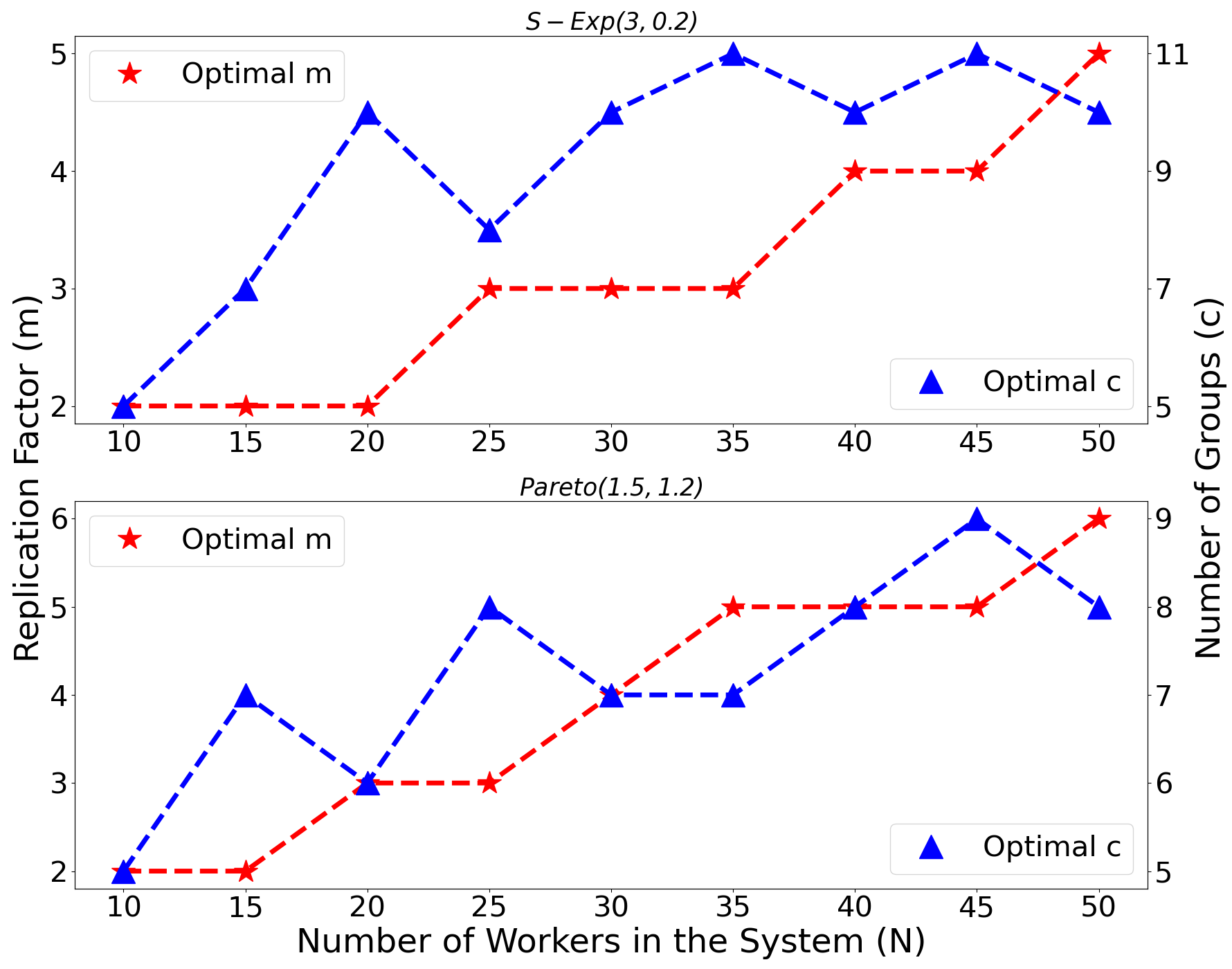}
    \caption{The number of groups $c$ and the repliction factor $m$ versus the number of workers $N$. The parameter settings are $\lambda=1.5$ and $\expec[T_{\text{cl}}]=10$. The service time distribution are $\sexpD(3,0.2)$ (upper) and $\parD(1.5,1.2)$ (lower). Since $m$ and $c$ are integers, we consider $m=\lfloor \frac{N}{c}\rfloor$. Both the optimal $m$ and $c$ increases with $N$.}
    \label{fig:Opt_c_m}
\end{figure}
We observe that the optimal $m$ and the optimal $c$ increase with $N$. The optimal $m$ increases stepwise, and the optimal $c$ increases fast when $N$ is small and increases smoothly when $N$ is large.
These observations are consistent with the conclusions in Section~\ref{Sub:largeN}.
Notice that the fluctuations on the curve of the optimal $c$ are caused by the fact that $c$ can not take every integer value. After all, $N$, $m$, and $c$ are all integers with the relation $N=mc$.

\section{Conclusion and Future Directions}
\label{sec:conclusions}

We consider an edge computing system with limited storage and computing resources in which the arrival jobs will be sent to the cloud when the system is busy. First, we addressed the problems concerning the number of groups that optimize the job-computing time and the job-blocking probability. We find it impossible to simultaneously minimize the job-computing time and the job-blocking probability for both shifted exponential and Pareto service times. Second, we use the average system time to evaluate the trade-off between the above two metrics. 
We find different optimal replication factors under certain conditions that minimize the average system time and propose the ASTO algorithm to verify the effectiveness of redundant computing resource allocation. 
Third, we study how the average system time and the optimal replication factor change with the job arrival rate, the cloud time, and the number of workers, and summarize the results in Table~\ref{Table:ave}.
Finally, the simulation results show that the ASTO algorithm significantly outperforms the benchmark for both shifted exponential and Pareto service times. The optimal replication factor changes with different system parameters, and the numerical analysis is consistent with the previous theoretical analysis.

This work sets the stage for many problems of interest to be studied in the future. We briefly describe three directions of immediate interest.
\subsubsection{Job-computing Time and job-blocking probability trade-off for M/G/c/k Queues}

As we mentioned in Section~\ref{subsec:formula}, our edge computing system is generally modeled as an M/G/c/k queue, where the queue capacity $k>c$. When all servers are busy, the new arrival job will wait in the limited-length queue, and the job will be blocked when the queue gets fully occupied. When we concentrate the computing resources on a few jobs to obtain a smaller computing time, the queue's capacity does not decrease linearly. Thus, compared to the M/G/c/c queue results, spreading resources may not be the best strategy considering the job-blocking probability. 
\subsubsection{Job-computing Time and job-blocking probability trade-off for Other Service Time Models}
We analyzed the most common service time models. Other distributions, such as bimodal and Weibull, are also interesting.
The optimal replication factor may behave differently for other light and heavy-tailed distributions since their computing cost versus\ time trade-offs are qualitatively different \cite{stragglers:AktasS19}. 

\subsubsection{System Service Rate for Blocking Systems}
System service rate is also a potential performance metric to evaluate the distributed system\cite{peng2021distributed}. We may analyze this metric for a general blocking system in the future, where the blocked job will be dropped rather than sent to the cloud in such a system. If the users require every computing result, they repeatedly send the request to the edge system until it is served. 

\bibliographystyle{IEEEtran}
\bibliography{bibli}

\begin{thebibliography}{10}
\providecommand{\url}[1]{#1}
\csname url@samestyle\endcsname
\providecommand{\newblock}{\relax}
\providecommand{\bibinfo}[2]{#2}
\providecommand{\BIBentrySTDinterwordspacing}{\spaceskip=0pt\relax}
\providecommand{\BIBentryALTinterwordstretchfactor}{4}
\providecommand{\BIBentryALTinterwordspacing}{\spaceskip=\fontdimen2\font plus
\BIBentryALTinterwordstretchfactor\fontdimen3\font minus \fontdimen4\font\relax}
\providecommand{\BIBforeignlanguage}[2]{{%
\expandafter\ifx\csname l@#1\endcsname\relax
\typeout{** WARNING: IEEEtran.bst: No hyphenation pattern has been}%
\typeout{** loaded for the language `#1'. Using the pattern for}%
\typeout{** the default language instead.}%
\else
\language=\csname l@#1\endcsname
\fi
#2}}
\providecommand{\BIBdecl}{\relax}
\BIBdecl

\bibitem{peng_newaccepted}
P.~Peng and E.~Soljanin, ``Computing redundancy in blocking systems: Fast service or no service,'' in \emph{Proc. Int. Conf. Wireless Commun. Signal Process. (WCSP)}.\hskip 1em plus 0.5em minus 0.4em\relax IEEE, Feb. 2024, pp. 281--286.

\bibitem{peng_newaccepted_ICCT}
------, ``Balancing on the edge: Fast service or no service,'' in \emph{Proc. IEEE Int. Conf. Commun. Technol. (ICCT)}.\hskip 1em plus 0.5em minus 0.4em\relax IEEE, Feb. 2024, pp. 709--713.

\bibitem{shi2016edge}
W.~Shi, J.~Cao, Q.~Zhang, Y.~Li, and L.~Xu, ``Edge computing: Vision and challenges,'' \emph{IEEE Internet Things J.}, vol.~3, no.~5, pp. 637--646, Otc. 2016.

\bibitem{mao2017survey}
Y.~Mao, C.~You, J.~Zhang, K.~Huang, and K.~B. Letaief, ``A survey on mobile edge computing: The communication perspective,'' \emph{IEEE Commun. Surv. Tutor.}, vol.~19, no.~4, pp. 2322--2358, 4th Quarter 2017.

\bibitem{li2018learning}
H.~Li, K.~Ota, and M.~Dong, ``Learning iot in edge: Deep learning for the internet of things with edge computing,'' \emph{IEEE Netw.}, vol.~32, no.~1, pp. 96--101, Jan./Feb. 2018.

\bibitem{hochstetler2018embedded}
J.~Hochstetler, R.~Padidela, Q.~Chen, Q.~Yang, and S.~Fu, ``Embedded deep learning for vehicular edge computing,'' in \emph{Proc. IEEE/ACM Symp. Edge Comput. (SEC)}.\hskip 1em plus 0.5em minus 0.4em\relax IEEE, Dec. 2018, pp. 341--343.

\bibitem{li2021random}
B.~Li, P.~Chen, H.~Liu, W.~Guo, X.~Cao, J.~Du, C.~Zhao, and J.~Zhang, ``Random sketch learning for deep neural networks in edge computing,'' \emph{Nat. Comput. Sci.}, vol.~1, no.~3, pp. 221--228, Mar. 2021.

\bibitem{byers2017architectural}
C.~C. Byers, ``Architectural imperatives for fog computing: Use cases, requirements, and architectural techniques for fog-enabled iot networks,'' \emph{IEEE Commun. Mag.}, vol.~55, no.~8, pp. 14--20, Aug. 2017.

\bibitem{yi2015fog}
S.~Yi, Z.~Hao, Z.~Qin, and Q.~Li, ``Fog computing: Platform and applications,'' in \emph{Proc. IEEE Workshop on Hot Topics in Web Systems and Technologies (HotWeb)}.\hskip 1em plus 0.5em minus 0.4em\relax IEEE, Jan. 2016, pp. 73--78.

\bibitem{chiang2016fog}
M.~Chiang and T.~Zhang, ``Fog and iot: An overview of research opportunities,'' \emph{IEEE Internet Things J.}, vol.~3, no.~6, pp. 854--864, Jun. 2016.

\bibitem{mahmud2018fog}
R.~Mahmud, R.~Kotagiri, and R.~Buyya, ``Fog computing: A taxonomy, survey and future directions,'' \emph{Internet of Everything: Algorithms, Methodologies, Technologies and Perspectives}, pp. 103--130, Otc. 2018.

\bibitem{satyanarayanan2017emergence}
M.~Satyanarayanan, ``The emergence of edge computing,'' \emph{Comput.}, vol.~50, no.~1, pp. 30--39, Jan. 2017.

\bibitem{wang2019efficient}
D.~Wang, G.~Joshi, and G.~W. Wornell, ``Efficient straggler replication in large-scale parallel computing,'' \emph{ACM Trans. Model. Perform. Eval. Comput. Syst.}, vol.~4, no.~2, pp. 1--23, Apr. 2019.

\bibitem{lee2017speeding}
K.~Lee, M.~Lam, R.~Pedarsani, D.~Papailiopoulos, and K.~Ramchandran, ``Speeding up distributed machine learning using codes,'' \emph{IEEE Trans. Inf. Theory}, vol.~64, no.~3, pp. 1514--1529, Mar. 2017.

\bibitem{kolosov2023theory}
O.~Kolosov, M.~F. Akta{\c{s}}, E.~Soljanin, and G.~Yadgar, ``Theory vs. practice in modeling edge storage systems,'' in \emph{Proc. IEEE Int. Perform. Comput. Commun. Conf. (IPCCC)}.\hskip 1em plus 0.5em minus 0.4em\relax IEEE, Otc. 2023, pp. 313--322.

\bibitem{mouradian2017comprehensive}
C.~Mouradian, D.~Naboulsi, S.~Yangui, R.~H. Glitho, M.~J. Morrow, and P.~A. Polakos, ``A comprehensive survey on fog computing: State-of-the-art and research challenges,'' \emph{IEEE Commun. Surv. Tutor.}, vol.~20, no.~1, pp. 416--464, 1st Quarter 2018.

\bibitem{saleem2020mobility}
U.~Saleem, Y.~Liu, S.~Jangsher, Y.~Li, and T.~Jiang, ``Mobility-aware joint task scheduling and resource allocation for cooperative mobile edge computing,'' \emph{IEEE Trans. Wireless Commun.}, vol.~20, no.~1, pp. 360--374, Jan. 2020.

\bibitem{liu2019dynamic}
C.-F. Liu, M.~Bennis, M.~Debbah, and H.~V. Poor, ``Dynamic task offloading and resource allocation for ultra-reliable low-latency edge computing,'' \emph{IEEE Trans. Commun.}, vol.~67, no.~6, pp. 4132--4150, Jun. 2019.

\bibitem{zhu2018task}
T.~Zhu, T.~Shi, J.~Li, Z.~Cai, and X.~Zhou, ``Task scheduling in deadline-aware mobile edge computing systems,'' \emph{IEEE Internet Things J.}, vol.~6, no.~3, pp. 4854--4866, Jun. 2018.

\bibitem{liu2016delay}
J.~Liu, Y.~Mao, J.~Zhang, and K.~B. Letaief, ``Delay-optimal computation task scheduling for mobile-edge computing systems,'' in \emph{Proc. IEEE Int. Symp. Inf. Theory (ISIT)}.\hskip 1em plus 0.5em minus 0.4em\relax IEEE, Aug. 2016, pp. 1451--1455.

\bibitem{kianidehkordi2020hierarchical}
S.~Kianidehkordi, N.~Ferdinand, and S.~C. Draper, ``Hierarchical coded matrix multiplication,'' \emph{IEEE Trans. Inf. Theory}, vol.~67, no.~2, pp. 726--754, Feb. 2020.

\bibitem{CC:ozfaturayGU19}
E.~Ozfatura, D.~G{\"u}nd{\"u}z, and S.~Ulukus, ``Speeding up distributed gradient descent by utilizing non-persistent stragglers,'' in \emph{Proc. IEEE Int. Symp. Inf. Theory (ISIT)}.\hskip 1em plus 0.5em minus 0.4em\relax IEEE, Sept. 2019, pp. 2729--2733.

\bibitem{schlegel2023codedpaddedfl}
R.~Schlegel, S.~Kumar, E.~Rosnes, and A.~G. i~Amat, ``Codedpaddedfl and codedsecagg: Straggler mitigation and secure aggregation in federated learning,'' \emph{IEEE Trans. Commun.}, Apr. 2023.

\bibitem{zhang2019model}
J.~Zhang and O.~Simeone, ``On model coding for distributed inference and transmission in mobile edge computing systems,'' \emph{IEEE Commun. Lett.}, vol.~23, no.~6, pp. 1065--1068, Jun. 2019.

\bibitem{wang2021optimal}
J.~Wang, C.~Cao, J.~Wang, K.~Lu, A.~Jukan, and W.~Zhao, ``Optimal task allocation and coding design for secure edge computing with heterogeneous edge devices,'' \emph{IEEE Trans. Cloud Comput.}, vol.~10, no.~4, pp. 2817--2833, Oct.-Dec. 2021.

\bibitem{choudhury2022tackling}
T.~Choudhury, W.~Wang, and G.~Joshi, ``Tackling heterogeneous traffic in multi-access systems via erasure coded servers,'' in \emph{Proc. Int. Symp. Theory Algorithmic Found. Protocol Design Mobile Netw. Mobile Comput.}, Otc. 2022, pp. 171--180.

\bibitem{stragglers:AktasS19}
M.~F. Aktas and E.~Soljanin, ``Straggler mitigation at scale,'' \emph{IEEE/ACM Trans. Netw.}, vol.~27, no.~6, pp. 2266--2279, Dec. 2019.

\bibitem{peng2021diversity}
P.~Peng, E.~Soljanin, and P.~Whiting, ``Diversity/parallelism trade-off in distributed systems with redundancy,'' \emph{IEEE Trans. Inf. Theory}, vol.~68, no.~2, pp. 1279--1295, Feb. 2022.

\bibitem{Mesos:HindmanKZ11}
B.~Hindman, A.~Konwinski, M.~Zaharia, A.~Ghodsi, A.~D. Joseph, R.~H. Katz, S.~Shenker, and I.~Stoica, ``Mesos: A platform for fine-grained resource sharing in the data center,'' in \emph{Proc. USENIX Symp. Netw. Syst. Des. Implementation}, Mar. 2011, pp. 22--22.

\bibitem{luo2021resource}
Q.~Luo, S.~Hu, C.~Li, G.~Li, and W.~Shi, ``Resource scheduling in edge computing: A survey,'' \emph{IEEE Communications Surveys \& Tutorials}, vol.~23, no.~4, pp. 2131--2165, 2021.

\bibitem{chen2021recent}
S.~Chen, Q.~Li, M.~Zhou, and A.~Abusorrah, ``Recent advances in collaborative scheduling of computing tasks in an edge computing paradigm,'' \emph{Sensors}, vol.~21, no.~3, p. 779, Jan. 2021.

\bibitem{fan2018application}
Q.~Fan and N.~Ansari, ``Application aware workload allocation for edge computing-based iot,'' \emph{IEEE Internet Things J.}, vol.~5, no.~3, pp. 2146--2153, Jun. 2018.

\bibitem{chen2018computation}
L.~Chen, S.~Zhou, and J.~Xu, ``Computation peer offloading for energy-constrained mobile edge computing in small-cell networks,'' \emph{IEEE/ACM Trans. Netw.}, vol.~26, no.~4, pp. 1619--1632, Aug. 2018.

\bibitem{tang2021computation}
Q.~Tang, Z.~Fei, B.~Li, and Z.~Han, ``Computation offloading in leo satellite networks with hybrid cloud and edge computing,'' \emph{IEEE Internet Things J.}, vol.~8, no.~11, pp. 9164--9176, Jun. 2021.

\bibitem{wang2022collaborative}
R.~Wang, W.~Zhu, G.~Liu, R.~Ma, D.~Zhang, S.~Mumtaz, and S.~Cherkaoui, ``Collaborative computation offloading and resource allocation in satellite edge computing,'' in \emph{Proc. IEEE Global Commun. Conf.}\hskip 1em plus 0.5em minus 0.4em\relax IEEE, Jan. 2023, pp. 5625--5630.

\bibitem{latency:JoshiSW17}
G.~Joshi, E.~Soljanin, and G.~Wornell, ``Efficient redundancy techniques for latency reduction in cloud systems,'' \emph{ACM Trans. Model. Perform. Eval. Comput. Syst.}, vol.~2, no.~2, pp. 1--30, Apr. 2017.

\bibitem{bitar2020minimizing}
R.~Bitar, P.~Parag, and S.~El~Rouayheb, ``Minimizing latency for secure coded computing using secret sharing via staircase codes,'' \emph{IEEE Trans. Commun.}, vol.~68, no.~8, pp. 4609--4619, Aug. 2020.

\bibitem{CC:duttaCP16}
S.~Dutta, V.~Cadambe, and P.~Grover, ``Short-dot: Computing large linear transforms distributedly using coded short dot products,'' in \emph{Advances In Neural Information Processing Systems}, 2016, pp. 2092--2100.

\bibitem{GradientCoding:TandonLD17}
R.~Tandon, Q.~Lei, A.~G. Dimakis, and N.~Karampatziakis, ``Gradient coding: Avoiding stragglers in distributed learning,'' in \emph{Proc. Int. Conf. Mach. Learn. (ICML)}.\hskip 1em plus 0.5em minus 0.4em\relax PMLR, Aug. 2017, pp. 3368--3376.

\bibitem{peng2021distributed}
P.~Peng, M.~Noori, and E.~Soljanin, ``Distributed storage allocations for optimal service rates,'' \emph{IEEE Trans. Commun.}, vol.~69, no.~10, pp. 6647--6660, Otc. 2021.

\end{thebibliography}

\begin{IEEEbiographynophoto}{Pei Peng} 
received his B.S. degree in information engineering from South China University of Technology, Guangzhou, China, in 2011, the M.S. degree in electronics and communication engineering from Shanghai Institute of Micro-system and Information Technology, Chinese Academy of Science, Shanghai, China, in 2014, and the PhD.\ degree in electrical and computer engineering from Rutgers, the State University of New Jersey, USA. 
Since 2022, he has been an assistant professor in the School of Telecommunication and Information Engineering at Nanjing University of Posts and Telecommunications, Nanjing, China. His research interests are coding in distributed systems, covert communications, edge computing, and satellite-terrestrial networks.
\end{IEEEbiographynophoto}

\begin{IEEEbiographynophoto}{Emina Soljanin}
 received the European Diploma in Electrical Engineering from the University of Sarajevo, Bosnia, in 1986, and the Ph.D. and M.S. degrees in electrical engineering from Texas A\&M University, College Station, TX, USA, in 1989 and 1994, respectively. She is
a Distinguished Professor of Electrical and Computer Engineering at Rutgers, The State University of New Jersey. Before moving to Rutgers, The State University of New Jersey, in January 2016, she was a member of Technical Staff (Distinguished) for 21 years in the Mathematical Sciences Research Center, Bell Labs. Her interests and expertise are broad and currently range from distributed computing to quantum information science. She has participated in numerous research and business projects, including power system optimization, magnetic recording, color
space quantization, hybrid ARQ, network coding, data and network security, distributed systems performance analysis, and quantum information theory. She was an outstanding alumnus at the Texas A\&M School of Engineering, the 2011 Padovani Lecturer, the 2016 and 2017 Distinguished Lecturer, and the 2019 IEEE Information Theory Society President
 \end{IEEEbiographynophoto}

\end{document}